\newtheorem{theorem}{Theorem}
\newtheorem{mydef}{Definition}
\title{General superposition states associated to the rotational and inversion symmetries in the phase space.}
\author[1,2]{Julio A. L\'opez-Sald\'ivar}
\affil[1]{Instituto de Ciencias Nucleares, Universidad Nacional Aut\'onoma de M\'exico,
Apdo. Postal 70-543, Ciudad de México 04510, M\'exico}
\affil[2]{Moscow Institute of Physics and Technology, Institutskii per. 9,
Dolgoprudnyi, Moscow~Region 141700, Russia}
\date{julio.lopez.8303@gmail.com}
\begin{document}

\maketitle

\begin{abstract}
The general quantum superposition states containing the irreducible representation of the $n$-dimensional groups associated to the rotational symmetry of the $n$-sided regular polygon i.e., the cyclic group ($C_n$) and the rotational and inversion symmetries of the polygon, i.e., the dihedral group ($D_n$) are defined and studied. It is shown that the resulting states form an $n$-dimensional orthogonal set of states which can lead to the finite representation of specific systems. The correspondence between the symmetric states and the renormalized states, resulting from the selective erasure of photon numbers from an arbitrary, noninvariant initial state, is also established. As an example, the general cyclic Gaussian states are presented. The presence of nonclassical properties in these states as subpoissonian photon statistics is addressed. Also, their use in the calculation of physical quantities as the entanglement in a bipartite system is discussed.
\end{abstract}

\section{Introduction}

The study of symmetries in physics has helped to the simplification of difficult problems. For example, the symmetries in the Hamiltonian dynamical evolution of a quantum system can be related to the definition of different conservation laws which, as in the classical theory, can be used to answer different questions. The use of symmetries in quantum mechanics, in particular the definition of states associated to point symmetry groups has been covered in several works \cite{manko1,manko2,manko3,castanos}. Especially, the states carrying the symmetry of the cyclic group $C_2=\mathbb{Z}/(2 \mathbb{Z})$, also called odd an even cat states, have been of great interest in the past decades. The nonclassical properties of this kind of states have been discussed in \cite{buzek}, together with their use in fundamental quantum theory \cite{sanders,wenger,jeong,stob,wineland} and in the quantum information framework \cite{vanerick,jeong2,ralph,gilchrist,bergmann}.

For several years, there was an impossibility to construct a cat state with a large photon number. Instead of that, the low photon cat states, known as kitten states, were generated \cite{ourjoumtsev1}. After that, the possibility to obtain full cat states has been demonstrated in several studies as: by using the reflexion of a coherent pulse from a optical cavity with one atom \cite{hacker,wang}, the use of homodyne detection in a photon number state \cite{ourjoumtsev2}, the photon subtraction from a squeezed vacuum state in a parametric amplifier \cite{neegaard}, via ancilla-assisted photon subtraction \cite{takahashi}, and by the subtraction of an specific photon number in a squeezed vacuum state \cite{gerrits}. The superposition of coherent states have non-classical features like squeezing of the quadrature components \cite{buzek,janszky,domokos}. There exist a possible experimental implementation of these superpositions \cite{szabo}, in particular superpositions of coherent states on a circle \cite{janszky,domokos,gonzalez}. The states adapted to this type of symmetry have also a connection to the phase-time operators in the harmonic oscillator \cite{susskind,nieto1,pegg1,pegg2}. The definition of states carrying the circle symmetry has been extended by the use of spin coherent states as in \cite{calixto}, also in \cite{calixto1} the use of $su(1,1)$ coherent states on the hyperboloid were considered.

More recently, a proposed method to generate states with higher discrete symmetries, as the ones defined here, has been obtained by the dynamic evolution of a matter-field interaction described by the Tavis-Cummings model \cite{cordero1,cordero2}. There is experimental evidence for the generation of superpositions of four coherent states with a number of 111 photons \cite{vlastakis}. Also, the cluster structure of light nuclei as $^{12}C$ and $^{13}C$ have been describe by the point symmetry groups, as the ones discussed here, $D_{3h}$ and $D_{3h}^\prime$, respectively.

In this work, the generalization of the quantum states associated to the irreducible representations of the group whose elements are the symmetry rotations of the $n$-sided regular polygon, also named the cyclic group ($C_n=\mathbb{Z}/(n \mathbb{Z})$), and the group containing the rotational and reflexion symmetries of the regular polygon, i.e., the dihedral group ($D_n$), is presented. Some of these type of states have been previously defined using coherent states \cite{manko1,manko2,manko3,castanos}. 

In the present work, it is shown that the cyclic and dihedral states form an orthogonal set of states, which can be used to define a discrete representation of states made of the superposition of rotations, in the case of the cyclic group, and rotations plus reflections in the case of the dihedral group. Also it can be seen that this discrete representation can simplify the calculation of quantum parameters as the entanglement between two subsystems within a system. For these reasons, we consider that given the applications of the cyclic and dihedral coherent states in quantum information, the generalization of such states to the noncoherent case is important. 

The proposed method discussed here, makes use of an initial state $\vert \phi \rangle$ which is not invariant under rotations. To define the cyclic states, the superposition of the rotated states $\vert \phi_r \rangle=\hat{R}(\theta_r)\vert \phi \rangle$ ($r=1,\ldots,n$; $\theta_r=2\pi(r-1)/n$), and the characters associated to the $\lambda$-th  irreducible representation and the $r$-th element of the group ($\chi^{(\lambda)}(g_r)$ ), are used. It is also discussed the relation between the cyclic states and the renormalized states obtained from the erasure of certain photon numbers in the photon statistics of $\vert \phi \rangle$ or $\hat{\rho}$, e.g., the cat states associated to the cyclic group $C_2$: $\vert \xi_\pm \rangle = N_\pm (\vert \alpha \rangle\pm\vert-\alpha\rangle)$ are the renormalized states resulting of eliminating the even and odd photon number states from the coherent state $\vert \alpha \rangle$ respectively. 

On the other hand, the dihedral group $D_n$ is the non-Abelian group that contains the rotations and inversions which leave the $n$-sided regular polygon invariant. The elements of the dihedral group are $D_n:\{\hat{R}(\theta_j), \hat{U}_j,\, j=1,\ldots,n \}$, with $\theta_j=2\pi(j-1)/n$, where the inversion operators in the phase space are defined by a rotation plus the complex conjugation ($\hat{C}$), i.e., $\hat{U}_j=\hat{C}\hat{R}(\theta_j)$.

Additionally to pure, non-pure cyclic and dihedral states can be defined through a density matrix. These states correspond to a quantum map of an noninvariant, arbitrary operator $\hat{\rho}$. This type of quantum maps have been recently relevant in quantum information theory. In particular, the quantum maps have been important for the quantum error correction as some of the studied qubit maps represent the interaction between a qubit and an environment \cite{terhal,caruso}. Furthermore, the study of the erasure map, presented here, can be important to figure out the experimental realization of the defined states, as the resulting states, depend on the absorption (erasure) of certain state numbers.

As a remainder of some group characteristics we establish that given a $n$ dimensional group $\{g_r;\, r=1, \ldots,n \}$, a conjugacy class is formed by all the elements $g_k$ which satisfy the similarity transformation $g_k^{-1} g_j g_k=g_j$, where $g_j$ is also a member of the group. An irreducible representation $\lambda$ is the representation of a group that cannot decompose further. To obtain the irreducible representation sometimes the following procedure should be applied: if there exist a similarity transformation of an element of the group $g_j$ which diagonalize it, i.e., $C^{-1}g_j C=A_D$, where $A_D$ is made of diagonal matrices $A_{D_j^{(\lambda)}}$, then the matrices $A_{D_j^{(\lambda)}}$ form an irreducible representation of $g_j$. The character $\chi$ associated to the irreducible representation $\lambda$, is defined as the trace of the diagonal matrix $A_{D_\lambda}$, that is $\chi^{(\lambda)}(g_j)={\rm Tr}\left(A_{D_j^{(\lambda)}}\right)$. Also, all the members of a conjugacy class share the same characters. In the case of the cyclic states the character associated to the irreducible representation $\lambda$ and element $g_r$ of the group is given by $\chi^{(\lambda)}_n(g_r)=e^{2\pi i(\lambda-1)(r-1)/n}$

This work is organized as follows: In section 2 a review of the cyclic states constructed by means of coherent states are presented. The generalization of these type of states for a non-coherent system is then described in section 3. The correspondence between the generalized cyclic state and a renormalized state obtained through the elimination of certain photon numbers in an original system is studied in section 4. In section 5, some examples are given, the cyclic Gaussian states are defined and some of their properties are exemplified. Also, the circle symmetry states are presented as an extension to the states associated to $C_n$, where $n\rightarrow \infty$. In section 6, the idea of the pure cyclic states of $C_n$ is extended to the case of non-pure density matrices. This is done by the definition of a map of the density matrix, which can also be related to the erasure and renormalization of certain photon numbers in the initial state. The usefulness of this kind of systems for the study of the entanglement in a two-mode system is shown in section 7. The dihedral states are defined in section 8. Finally, some conclusions are given.

\section{Cyclic coherent states}
In previous works, different states associated to the irreducible representation of cyclic groups \cite{manko1,manko2,manko3,castanos} have been defined using coherent states \cite{glauber,titulaer,birula,stoler}. The resulting states called crystallized cat states have some interesting properties as subpoissonian photon statistics, squeezing, and antibunching \cite{sun1,sun2,castanos}. Also, it has been demonstrated that they can be generated by the interaction of an atom with an electromagnetic field \cite{hacker,wang}. Here, we present a summary of the definition and some properties of the coherent cyclic states. 

The cyclic group $C_n$ have as elements the discrete rotations associated to the symmetries of the regular polygon of $n$ sides, i.e. $C_n=\{R(\theta_j), \theta_j = 2\pi (j-1) /n, \ {\rm with}\ (j=1,\ldots,n)\}$.  The number of elements is equal to the cycle of the group and they can be divided in different conjugacy classes $\{g_r\}$. The characteristic (or character) of the class $g_r$ for the irreducible representation $\lambda$ is denoted as $\chi^{(\lambda)}_n(g_r)$ is given by the trace of the irreducible representation. It is known that in the case of the cyclic group each element forms its own class ($g_j=R(\theta_j)$) and that the character of the class are the $n$ roots of the identity,
\begin{equation}
\chi^{(\lambda)}_n(g_r)=\exp \left[ \frac{2 i \pi (\lambda-1)(r-1) }{n}\right] \, , \quad {\rm with}\ \lambda,r=1,\ldots,n \, .
\label{chi}
\end{equation}
Additionally, the characters for any two irreducible representations $\lambda$ and $\lambda'$ are orthonormal, i.e.,
\begin{equation}
\frac{1}{n}\sum_{r=1}^n \chi^{(\lambda)}_n(g_r) \chi^{*(\lambda')}_n(g_r)= \delta_{\lambda \lambda'}
\label{ort1}
\end{equation}
and also the sum of the characters over all the irreducible representations $\lambda$ satisfy that
\begin{equation}
\frac{1}{n}\sum_{\lambda=1}^n \chi^{(\lambda)}_n(g_r) \chi^{*(\lambda)}_n(g_{r'})= \delta_{r r'} \, .
\label{ort2}
\end{equation}
These two orthogonality conditions can be quickly checked using the rule for the sum of the identity roots
\begin{equation}
\sum_{j=1}^n \mu_n^j=0\, , \quad {\rm where} \ \mu_n=\exp\left(\frac{2\pi i}{n}\right),
\label{powers}
\end{equation}
such property also leads to the following theorem.
\begin{theorem}
\label{tt1}
Let $r$ be an integer and $\mu_n=\exp(2\pi i/n )$, then $\sum_{j=1}^n \mu_n^{jr}=n\, \delta_{{\rm mod}(r,n),0}$.
\end{theorem}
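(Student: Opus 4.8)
The plan is to recognize the sum as a finite geometric series in the single quantity $q = \mu_n^r$, since $\mu_n^{jr} = (\mu_n^r)^j = q^j$, and then to split the argument according to whether $q = 1$. The crucial preliminary observation is that $q^n = \mu_n^{nr} = (\mu_n^n)^r = 1^r = 1$ for every integer $r$, so $q$ is always an $n$-th root of unity; moreover $q = 1$ holds precisely when $e^{2\pi i r/n} = 1$, i.e.\ when $n$ divides $r$, which is exactly the condition ${\rm mod}(r,n) = 0$ encoded by the Kronecker delta on the right-hand side.

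First I would dispose of the degenerate case $n \mid r$. Here $q = 1$, every summand $q^j$ equals $1$, and the sum collapses to $\sum_{j=1}^n 1 = n$, matching the right-hand side $n\,\delta_{{\rm mod}(r,n),0} = n$. For the complementary case $n \nmid r$, so that $q \neq 1$, I would use a multiply-and-shift argument in order to avoid dividing by $q-1$ explicitly. Writing $S = \sum_{j=1}^n q^j$, multiplication by $q$ together with a reindexing gives $qS = \sum_{j=2}^{n+1} q^j = S - q + q^{n+1}$, and since $q^{n+1} = q^n \cdot q = q$ by the preliminary observation, the two extra terms cancel and $qS = S$. Hence $(q-1)S = 0$, and as $q \neq 1$ the factor $q-1$ is invertible, forcing $S = 0$, which again agrees with $n\,\delta_{{\rm mod}(r,n),0} = 0$. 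Equivalently, one may simply quote the closed form $\sum_{j=1}^n q^j = q(q^n - 1)/(q-1)$ and note that $q^n - 1 = 0$ while $q - 1 \neq 0$. The special case $r = 1$ then recovers the identity-root relation $\sum_{j=1}^n \mu_n^j = 0$ recorded in (\ref{powers}).

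There is no substantial obstacle here, as the statement is elementary; the only point requiring genuine care is the case distinction itself, since the closed-form geometric sum is valid solely for $q \neq 1$. One must verify that $q = 1$ corresponds exactly to ${\rm mod}(r,n) = 0$, rather than merely to $r = 0$. This is precisely where the modular formulation proves essential: it correctly handles negative, zero, and arbitrarily large integers $r$, all of which reduce to whether the exponent $r$ is a multiple of $n$, so that the single parameter $q$ and its order $n$ govern the entire dichotomy.
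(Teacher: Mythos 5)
Your proof is correct and follows essentially the same route as the paper's: a case split on whether $n$ divides $r$, with the trivial sum $\sum_{j=1}^n 1 = n$ in the divisible case and the geometric-series identity (which the paper quotes directly as $\sum_{j=1}^n x^j = x(x^n-1)/(x-1)$ with $x=\mu_n^r$) in the other. Your multiply-and-shift derivation of $S=0$ is just a self-contained proof of that same closed form, so the two arguments coincide in substance.
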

\begin{proof}
It is clear that for $r$ being a multiple  of $n$: ${\rm mod}(r,n)=0$, $\mu_n^{rj}=1$ and thus the sum $\sum_{j=1}^n \mu_n^{jr}$ is equal to $n$. For $r$ not being a multiple of $n$ (${\rm mod}(r,n) \neq 0$) we remember that the sum 
\[
\sum_{j=1}^n x^j=x\frac{x^n-1}{x-1} \, ,
\]
which in the case of $x=\mu_n^r$, implies
\[
\sum_{j=1}^n \mu_n^{jr}=\mu_n^r \frac{\mu_n^{rn}-1}{\mu_n^r-1}=0\, ,  
\]
as $\mu_n^{rn}=1$. It is important to notice that this property is satisfied for any integer, in particular by $r$ being a negative integer.
\end{proof}

Given the orthogonality properties in Eqs.~(\ref{ort1}) and (\ref{ort2}) one can define a macroscopic quantum state for each one of the irreducible representations of the cyclic group as follows
\begin{equation}
\left\vert \psi^{(\lambda)}_n \right\rangle=\mathcal{N}_\lambda \sum_{r=1}^n \chi^{(\lambda)}_n (g_r) \vert \alpha_r \rangle \, , \quad \sum_{r,r'=1}^n \chi^{(\lambda)}_n (g_r) \chi^{*(\lambda)}_n (g_{r'}) \langle \alpha_{r'} \vert \alpha_r \rangle = \mathcal{N_\lambda}^{-2} \, ,
\label{cats}
\end{equation}
where the coherent state parameter $\alpha_r={\rm Re}(\alpha_r)+i\,{\rm Im}(\alpha_r)$ is given by the rotation of a fixed number $\alpha$ in the complex plane,
\[
\left( \begin{array}{cc} {\rm Re}(\alpha_r) \\  {\rm Im}(\alpha_r)\end{array}\right)= R(\theta_r) \left( \begin{array}{cc} {\rm Re}(\alpha) \\  {\rm Im}(\alpha)\end{array}\right) \, .
\]
It is important to notice that all the states for different irreducible representations form an orthonomal set with $\left\langle \psi_n^{(\lambda)} \Big\vert \psi_n^{(\lambda')} \right\rangle = \delta_{\lambda \lambda'}$. In the case of the cyclic group $C_2$ we have as the result the standard odd and even cat states $\vert \psi^{(1,2)} \rangle= \mathcal{N}_\pm (\vert \alpha \rangle \pm \vert - \alpha \rangle)$, which can have subpoissonian photon statistic, squeezing, and antibunching \cite{castanos}.

The coherent cyclic states $\vert \psi_n^{(\lambda)}\rangle$ are eigenvalues of the power of the annihilation operator $\hat{a}^n$, i.e.,
\[
\hat{a}^n \vert \psi_n^{(\lambda)}\rangle= \alpha^n \vert \psi_n^{(\lambda)}\rangle \, .
\]
Also, one can change the irreducible representation of the state by acting the annihilation operator $\hat{a}$ to another state: 
\[
\hat{a}\vert \psi_n^{(\lambda)}\rangle=\alpha \frac{\mathcal{N}_{\lambda}}{ \mathcal{N}_{\lambda'}} \vert \psi_n^{(\lambda')}\rangle \, ,
\]
where the value of the new irreducible representation depends on the original one $\lambda'(\lambda)$.

\section{Generalization of cyclic states as superpositions of rotations in the phase space.}

The necessity of a generalization of the cyclic states to a superposition of arbitrary, non-coherent systems can be explained by their possible use in quantum information theory. Also, the cyclic states form an orthogonal set of states which can lead to a finite representation of certain quantum systems.

First, let us suppose an initial quantum state $\vert \phi \rangle$ and its representation in the Fock basis
\[
\vert \phi \rangle = \sum_{m=0}^\infty A_m(\phi) \vert m \rangle \, , \quad {\rm with} \ \sum_{m=0}^\infty \vert A_m(\phi) \vert^2 =1 \, .
\]
The discrete rotations in the phase space associated to the symmetries of the regular polygon in the cyclic group $C_n$ are given by the operator $\hat{R}(\theta_j)=\exp(-i \theta_j \hat{n})$, where $\theta_j=2 \pi(j-1)/n$; $j=1, \ldots , n$, and $\hat{n}$ is the bosonic number operator. To every one of the elements of the cyclic group we have then a rotation of the general state $\vert \phi \rangle$, which can be expressed as
\[
\vert \phi_j \rangle = \hat{R}(\theta_j) \vert \phi \rangle \, .
\]
\begin{mydef}
Let $\vert \phi \rangle=\sum_{m=0}^\infty A_m (\phi) \vert m \rangle$ be a quantum state with at least one mean quadrature component ($\hat{x}=(\hat{a}+\hat{a}^\dagger)/\sqrt{2}$, $\hat{p}=i(\hat{a}^\dagger-\hat{a})/\sqrt{2}$) different from zero, i.e., $\langle \phi \vert \hat{x} \vert \phi \rangle \neq 0$, or $\langle \phi \vert \hat{p} \vert \phi \rangle \neq 0$.We define the general cyclic state for the irreducible representation $\lambda$ of the group $C_n$ as
\begin{equation}
\left\vert \psi_n^{(\lambda)} (\phi) \right\rangle = \mathcal{N}_\lambda \sum_{r=1}^n \chi^{(\lambda)}_n (g_r) \vert \phi_r \rangle \, ,
\label{ccy}
\end{equation}
where $\chi_n^{(\lambda)}(g_r)$ is the character associated to the irreducible representation $\lambda$ and to the element of the group $g_r \in C_n$, and where 
\[
\mathcal{N}_\lambda^{-2}=\sum_{r,r'=1}^n  \chi^{(\lambda)}(g_r) \chi^{*(\lambda)}(g_{r'}) \langle \phi_{r'} \vert \phi_r \rangle \, .
\]
\label{defi1}
\end{mydef}

To obtain a well defined state we emphasize that the original state cannot be invariant under the rotations discussed above, i.e., $\vert \phi_r \rangle \neq \vert \phi \rangle$ for $r=2,\ldots,n$. This property can be satisfied when the Wigner function of the state  in the phase space $W(x,p)$ is given by a non symmetric distribution or when the state is not centered at the origin of the phase space, i.e., $\int dx \, dp \, x \, W(x,p)\neq 0$, or $\int dx \, dp \, p \, W(x,p)\neq 0$.

As these states carry the irreducible representation of the group $C_n$, they are invariant, up to a phase, under the discrete rotations $\hat{R}(\theta_j)$. To prove this property, lets suppose the action of the rotation $\hat{R}(\theta_l)$, $1\leq l \leq n$, over the state $\left\vert \psi_n^{(\lambda)} (\phi) \right\rangle$
\[
\hat{R}(\theta_l) \left\vert \psi_n^{(\lambda)} (\phi) \right\rangle= \mathcal{N}_\lambda \sum_{r=1}^n \chi_n^{(\lambda)} (g_r) \hat{R}(\theta_{r+l}) \vert \phi \rangle \, ,
\]
as the character of the representation $\lambda$ is
\[
\chi_n^{(\lambda)} (g_r)=\mu_n^{(\lambda-1)(r-1)}=\mu_n^{(\lambda-1)(r+l-1)}\mu_n^{(1-\lambda)l} = \chi_n^{(\lambda)} (g_{r+l}) \mu_n^{(1-\lambda)l} \, ,
\]
then we obtain
\[
\hat{R}(\theta_l) \left\vert \psi_n^{(\lambda)} (\phi) \right\rangle= \mathcal{N}_\lambda \,  \mu_n^{(1-\lambda)l} \sum_{r=1}^n \chi_n^{(\lambda)} (g_{r+l}) \hat{R}(\theta_{r+l}) \vert \phi \rangle \, ,
\]
given the periodicity of the characters and the rotation operators ($\mu_n^{x+n}=\mu_n^x$, $\hat{R}(\theta_{j+n})=\hat{R}(\theta_{j})$), this sum give us, up to a phase, the same state as the original, i.e.,
\begin{equation}
\hat{R}(\theta_l) \left\vert \psi_n^{(\lambda)} (\phi) \right\rangle=   \mu_n^{(1-\lambda)l} \left\vert \psi_n^{(\lambda)}(\phi) \right\rangle \, .
\label{cyc_inv}
\end{equation}

It can also be seen that by the use of the explicit form of the rotated states in the Fock basis $\vert \phi_r \rangle=\sum_{m=0}^\infty A_m(\phi) e^{-i \theta_r m} \vert m \rangle$, one obtains
\[
\left\vert \psi_n^{(\lambda)} (\phi) \right\rangle = \mathcal{N}_\lambda \sum_{r=1}^n \sum_{m=0}^\infty \chi^{(\lambda)}_n (g_r) A_m(\phi) e^{-i\theta_r m}\vert m \rangle \, ,
\]
which can be also rewritten as
\begin{equation}
\left\vert \psi_n^{(\lambda)} (\phi) \right\rangle = \mathcal{N}_\lambda \sum_{r=1}^n \sum_{m=0}^\infty  \mu_n^{(\lambda-1-m)(r-1)} A_m(\phi) \vert m \rangle \, .
\label{cyc1}
\end{equation}
Given the characteristics of the sum of the powers of the parameter $\mu_n$, expressed in Eq.~(\ref{powers}), one can show that the different states for the cyclic group $C_n$ form an ortonormal set. To show this, lets suppose the inner product of two cyclic states with irreducible representations $\lambda$, and $\lambda^\prime$, i.e.,
\[
\left\langle \psi_n^{(\lambda^\prime)} (\phi) \right\vert \psi_n^{(\lambda)} (\phi) \Big\rangle= \mathcal{N}_\lambda  \mathcal{N}_{\lambda^\prime}  \sum_{r,r'=1}^n \sum_{m,m'=0}^\infty A_m(\phi) A_{m'}^*(\phi)\, \mu_n^{(\lambda-1-m)(r-1)} \mu_n^{(1-\lambda^\prime+m')(r'-1)} \delta_{m',m} \, ,
\]
performing first the sums over the parameter $r'$, we have
\[
\left\langle \psi_n^{(\lambda^\prime)} (\phi) \right\vert \psi_n^{(\lambda)} (\phi) \Big\rangle= \mathcal{N}_\lambda  \mathcal{N}_{\lambda^\prime}  n \sum_{m=0}^\infty \sum_{r=1}^n \vert A_m (\phi) \vert^2 \mu_n^{\lambda^\prime-1-m}\, \mu_n^{(\lambda-1-m)(r-1)} \delta_{{\rm mod}(1-\lambda^\prime+m,n),0} \, .
\]
As established by Theorem~\ref{tt1}, this sum is different from zero when $1-\lambda^\prime+m=s n$ (with $s\in \mathbb{Z}$ ). This leads to the condition $m=sn-1+\lambda^\prime$. From this, we can change the sum over $m$ to a sum over $s$, obtaining
\[
\left\langle \psi_n^{(\lambda^\prime)} (\phi) \right\vert \psi_n^{(\lambda)} (\phi) \Big\rangle= \mathcal{N}_\lambda  \mathcal{N}_{\lambda^\prime}  \sum_{s=0}^\infty \sum_{r=1}^n \vert A_{ns-1+\lambda} (\phi) \vert^2 \mu_n^{(\lambda-\lambda^\prime)r} \mu_n^{\lambda-\lambda'}\, .
\]
Similarly to the previous step, the sum over the parameter $r$ is different from zero when $\lambda-\lambda'=s' n$ with $s' \in \mathbb{Z}$. As the parameters satisfy $1\leq \lambda,\lambda' \leq n$, the only possible value  is that $\lambda-\lambda'=0$, so
\[
\left\langle \psi_n^{(\lambda^\prime)} (\phi) \right\vert \psi_n^{(\lambda)} (\phi) \Big\rangle= \mathcal{N}_\lambda  \mathcal{N}_{\lambda^\prime}  \sum_{s=0}^\infty \vert A_{ns-1+\lambda} (\phi) \vert^2 \delta_{\lambda,\lambda'} \, ,
\]
which in the case $\lambda\neq \lambda'$ is equal to zero and by the expression for the normalization constant in Def.~(\ref{defi1}) is equal to one when $\lambda= \lambda'$. Finally, arriving to the expression
\[
\left\langle \psi_n^{(\lambda^\prime)} (\phi) \right\vert \psi_n^{(\lambda)} (\phi) \Big\rangle=\delta_{\lambda,\lambda'} \, .
\]

 Other important properties of the cyclic states are addressed in the next section.

\section{State erasure as a quantum map and the cyclic states.}
In this section, the connection between the erasure map and the cyclic states is studied. This correspondence can lead to the experimental implementation of the cyclic states as these states can be seen as coming from the absorption (or erasure) of certain photon numbers.

The general cyclic states defined above, can also be defined as the result of selective loss of information in a quantum system, that is, from the erasure of a subset of states of an original state $\vert \phi \rangle=\sum_{n=0}^\infty A_m(\phi) \vert m \rangle$.

As an example, one can suppose the selective erasure of the probabilities $A_m(\phi)$ for all values of odd $m$, and after this erasure, the renormalization of the state is performed. In that case, one will have the following state made with only even number states
\begin{equation}
\vert \psi_{even} \rangle = N \sum_{m \ even} A_m (\phi) \vert m \rangle \, ,
\label{even}
\end{equation}
where $N$ is the normalization constant $N^{-2}=\sum_{m \ even} \vert A_m(\phi) \vert^2$. Lets compare the previous expression with the cyclic state for $n=2$, $\lambda=1$: $\left\vert \psi_2^{(1)} (\phi)\right\rangle$. This state is given by
\[
\left\vert \psi_2^{(1)} (\phi) \right\rangle=\mathcal{N}_1 \sum_{r=1}^2 \sum_{m=0}^\infty \mu_2^{r m} A_m (\phi) \vert m \rangle, \quad \mu_2=-1\, .
\]
By performing the sum over $r$, we then obtain
\[
\left\vert \psi_2^{(1)} (\phi)\right\rangle=\mathcal{N}_1  \sum_{m=0}^\infty (1+(-1)^m) A_m (\phi) \vert m \rangle=\mathcal{N}_1 \sum_{m \ even} 2  A_m (\phi) \vert m \rangle \, ,
\]
which is the same expression as Eq.~(\ref{even}) with $N=2\mathcal{N}_1$. The same can be done to the state resulting of the elimination of even states, which is equal to the cyclic state with $n=2$, $\lambda=2$, i.e., $\left\vert \psi_2^{(2)}(\phi) \right\rangle=N \sum_{m \ odd} A_m (\phi) \vert m \rangle$. In general, the equality between the cyclic states and the states resulting of the elimination of certain number states can be established in the following theorem.

\begin{theorem}
Let  $n$ and $\lambda$ be two positive integers with $\lambda\leq n$, and $\vert \Psi_{n,\lambda} (\phi) \rangle$ be the renormalized state obtained after the elimination of the number states $\vert m \rangle$ in $\vert \phi \rangle=\sum_{m=0}^\infty A_m (\phi) \vert m \rangle$, which do not satisfy the condition ${\rm mod}(m-\lambda+1,n)=0$, then $\vert \Psi_{n,\lambda} (\phi)\rangle$ is equal to the cyclic state $\left\vert \psi_n^{(\lambda)}(\phi)\right\rangle$ up to a phase.
\label{teo2}
\end{theorem}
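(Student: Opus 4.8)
The plan is to start from the explicit Fock-basis representation of the cyclic state already obtained in Eq.~(\ref{cyc1}),
\[
\left\vert \psi_n^{(\lambda)}(\phi)\right\rangle = \mathcal{N}_\lambda \sum_{r=1}^n \sum_{m=0}^\infty \mu_n^{(\lambda-1-m)(r-1)} A_m(\phi)\vert m\rangle \, ,
\]
and to collapse the sum over the group index $r$ before the sum over $m$. Reindexing with $j=r-1$ turns the inner sum into $\sum_{j=0}^{n-1}\mu_n^{(\lambda-1-m)j}$, and since $\mu_n^{(\lambda-1-m)\cdot 0}=\mu_n^{(\lambda-1-m)n}=1$ this coincides with the sum covered by Theorem~\ref{tt1}. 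Applying that theorem with the integer argument $\lambda-1-m$ (which the theorem explicitly allows to be negative) gives $n\,\delta_{{\rm mod}(\lambda-1-m,n),0}$.

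Next I would record the elementary fact that ${\rm mod}(\lambda-1-m,n)=0$ holds exactly when ${\rm mod}(m-\lambda+1,n)=0$, since the two arguments differ only in sign and the vanishing of a residue is insensitive to that sign. Substituting the collapsed sum back then yields
\[
\left\vert \psi_n^{(\lambda)}(\phi)\right\rangle = n\,\mathcal{N}_\lambda \sum_{m:\,{\rm mod}(m-\lambda+1,n)=0} A_m(\phi)\vert m\rangle \, ,
\]
so the cyclic state is supported precisely on the Fock components $\vert m\rangle$ with $m\equiv \lambda-1$ modulo $n$, i.e.\ $m=sn+\lambda-1$ for $s=0,1,2,\ldots$, and on that support it carries exactly the original amplitudes $A_m(\phi)$. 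This is the same collapse already exhibited for the special cases $n=2$, $\lambda=1,2$ in Eq.~(\ref{even}) and the lines following it, now done in full generality.

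Finally I would compare this with the erasure state of the statement, which by construction retains only those same components and is then renormalized, namely $\vert\Psi_{n,\lambda}(\phi)\rangle = N\sum_{m:\,{\rm mod}(m-\lambda+1,n)=0} A_m(\phi)\vert m\rangle$ with $N^{-2}=\sum_{m:\,{\rm mod}(m-\lambda+1,n)=0}\vert A_m(\phi)\vert^2$. The two expressions have identical Fock support and identical relative coefficients, differing only by the overall scalar $n\mathcal{N}_\lambda$ against $N$; since both vectors are normalized, this scalar necessarily has modulus one, which is exactly the asserted equality up to a phase (indeed, consistency with the normalization in Def.~\ref{defi1} forces $N=n\mathcal{N}_\lambda$ when the constants are taken positive). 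I expect the only real bookkeeping to watch is the reindexing of the $r$-sum so that Theorem~\ref{tt1} applies verbatim, together with the sign flip in the residue condition; everything else follows directly from the collapse of the character sum.
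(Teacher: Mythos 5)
Your proof is correct and takes essentially the same route as the paper's: collapse the character sum in Eq.~(\ref{cyc1}) using Theorem~\ref{tt1}, identify the surviving Fock components $m=\lambda-1+sn$ with their original amplitudes, and compare with the renormalized erasure state. Your computation is in fact slightly cleaner, since the paper carries a spurious factor $\mu_n^{1-\lambda}$ in the collapsed sum (the sum equals $n\,\delta_{{\rm mod}(\lambda-1-m,n),0}$ exactly, as you find, so with positive normalization constants the phase is trivial), though this discrepancy is immaterial to the ``up to a phase'' conclusion.
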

\begin{proof}
The state after the erasure map $\vert \Psi_{n,\lambda} (\phi) \rangle$, has the following expression
\[
\vert \Psi_{n,\lambda} \rangle=N_{\lambda,n} \sum_{m} A_m (\phi) \delta_{{\rm mod}(m-\lambda+1,n),0} \vert m \rangle
\]
which only contains the number states that satisfy $m-\lambda+1=ln$ (with $l$ an nonnegative integer), then 
\begin{equation}
\vert \Psi_{n,\lambda} (\phi) \rangle=N_{\lambda,n} \sum_{l} A_{\lambda-1+ln} (\phi)  \vert \lambda-1+ln \rangle \, .
\label{elim}
\end{equation}
On the other hand, by using the property for the roots of the identity ($\mu_n$) given in Theorem~\ref{tt1} ($\sum_{j=1}^n \mu_n^{j l}=n \,  \delta_{{\rm mod}(l,n),0}$), in the definition of $\left\vert \psi_n^{(\lambda)}(\phi)\right\rangle$ in Eq.~(\ref{cyc1}), we can show that
\[
 \sum_{r=1}^n  \mu_n^{(\lambda-1-m)(r-1)}=n \, \mu_n^{1-\lambda} \delta_{{\rm mod}(\lambda-1-m,n),0} \, ,
\]
this means that only the states with $m=\lambda-1+l n$ (with $l$ a nonnegative integer) are part of $\left\vert \psi_n^{(\lambda)}\right\rangle$, i. e.,
\begin{equation}
\left\vert \psi_n^{(\lambda)}(\phi)\right\rangle=n \, \mathcal{N}_\lambda \, \mu_n^{1-\lambda} \sum_{l=0}^\infty  A_{\lambda-1+l n} (\phi)\vert \lambda-1+l n \rangle \, .
\label{cyc2}
\end{equation}
Finally, when comparing Eqs.~(\ref{elim}) and~(\ref{cyc2}) we arrive to the conclusion
\begin{equation}
\vert \Psi_{n, \lambda} \rangle = \mu_n^{1-\lambda} \left\vert \psi_n^{(\lambda)} (\phi)\right\rangle \, ,
\end{equation}
with the relation between the normalization constants being $n\, \mathcal{N}_\lambda=N_{\lambda,n}$, and the phase between the cyclic state and the erasure state, being $\mu_n^{1-\lambda}=\exp{(2 \pi i (1-\lambda)/n)}$.
\end{proof}
Given this identification, it can be seen that the photon number statistics for the state $\left\vert \psi_n^{(\lambda)} (\phi) \right\rangle$ contain only the photon numbers which satisfy ${\rm mod}(m-\lambda+1,n)=0$.

The correspondence between the cyclic states and the states resulting from the quantum erasure map can lead to the experimental realization of the cyclic states. One can for example think of an initial nonivariant state $\vert \phi \rangle$, with a small mean photon number ($\langle \phi \vert \hat{n} \vert \phi \rangle \approx 0$). If one has a process where the number states $\vert 1 \rangle$ or $\vert 2 \rangle$ are erased, e.g., by the absorption of one or two photons of the electromagnetic field, then one can expect that the resulting state will be similar to a cyclic state.

\section{Examples.}

\subsection{Cyclic Gaussian states.}

Here we define different superpositions of Gaussian states associated to the cyclic groups. These superpositions are connected with the squeezed states defined in \cite{nieto,hillery}. As an example of the general procedure described above, one can define cyclic states using Gaussian wavepackets as initial systems.
Suppose a general one dimensional Gaussian state in the position basis
\begin{equation}
\psi (x)= \left(\frac{a+a^*}{\pi}\frac{1+2a}{1+2a^*}\right)^{1/4} \exp\left\{ -\frac{b^2+b b^*}{4(a+a^*)}\right\} \exp \left\{ -a x^2+b x\right\}\, , \quad a_R>0\, , \ b\neq0 \, ,
\label{gaus}
\end{equation}
with $a=a_R+ia_I$, $b=b_R+i b_I$. This state can be characterized by the mean values of the quadrature components $(\hat{p},\hat{q})$, and the corresponding covariance matrix $\sigma$. Which in the case of the state (\ref{gaus}) are
\begin{equation}
\langle \hat{x} \rangle=\frac{b+b^*}{2(a+a^*)}, \quad \langle \hat{p} \rangle= \frac{i (a b^*-a^* b)}{a+a^*}\, , \quad \sigma=\frac{1}{2(a+a^*)}\left(\begin{array}{cc}
4 \vert a \vert^2 & i(a-a^*) \\ i(a-a^*) & 1
\end{array}\right) \, .
\end{equation}
When this state is rotated in the phase space using the propagator $\langle x \vert \hat{R}(\theta) \vert y \rangle$, where $\hat{R}(\theta)=\exp(-i\theta \hat{n})$ is the rotation operator, the obtained state is still Gaussian with new parameters $a(\theta)$, $b (\theta)$ given in terms of the original Gaussian parameters $a$, and $b$, as follows
\begin{eqnarray}
a (\theta)&=&\frac{2 i a \cos \theta-\sin \theta}{2 (i \cos \theta-2  a \sin \theta)} \, , \nonumber \\
b (\theta)&=& \frac{b}{\cos \theta+2 i a \sin \theta} \, .
\end{eqnarray}
The cyclic Gaussian state for the irreducible representation $\lambda$ of the group $C_n$ is then given by the expression
\begin{equation}
\Psi_n^{(\lambda)}(x)= \mathcal{N}_\lambda \sum_{r=1}^n \chi_n^{(\lambda)} (g_r) \psi_r (x)\, ,
\label{gcic}
\end{equation}
with a value of $\psi_r (x)$ analogous to the initial state of Eq.~(\ref{gaus})
\begin{eqnarray}
\psi_r (x)=\left(\frac{a(\theta_r)+a^*(\theta_r)}{\pi}\frac{1+2a(\theta_r)}{1+2a^*(\theta_r)}\right)^{1/4} \exp\left\{ -\frac{b^2(\theta_r)+b(\theta_r) b^*(\theta_r)}{4(a(\theta_r)+a^*(\theta_r))}\right\} \times \nonumber \\
\exp \{ -a(\theta_r) x^2+b (\theta_r) x \} \, .
\end{eqnarray}
Given this expression one can construct then the cyclic states using Eq.~(\ref{gcic}).

For the cyclic group $C_2$, the cyclic states can be described by the following two orthogonal states
\begin{equation}
\Psi^{(1,2)}_2 (x)=\mathcal{N}_{1,2} \, e^{-a x^2}(e^{b x} \pm e^{-b x}) \, , \quad \mathcal{N}_{1,2}=\left( \frac{a+a^*}{\pi}\frac{1+2a}{1+2a^*}\right)^{1/4} \frac{e^{-\frac{b^2+b b^*}{4(a+a^*)}}}{\sqrt{2}\left(1\pm e^{-\frac{b b^*}{a+a^*}}\right)^{1/2}} \, ,
\label{goga}
\end{equation}
\begin{figure}
\centering
(a)\includegraphics[scale=0.25]{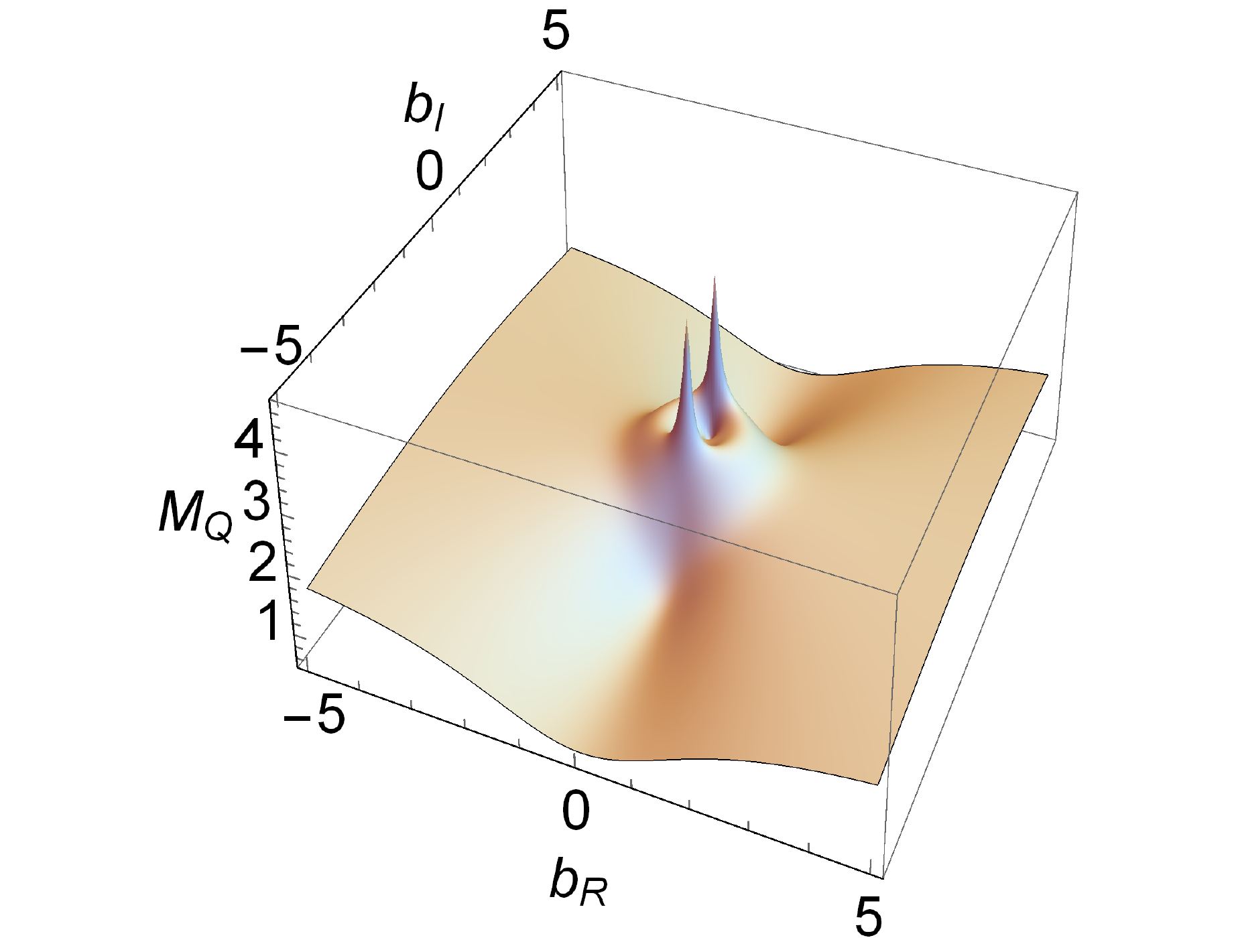}
(b)\includegraphics[scale=0.25]{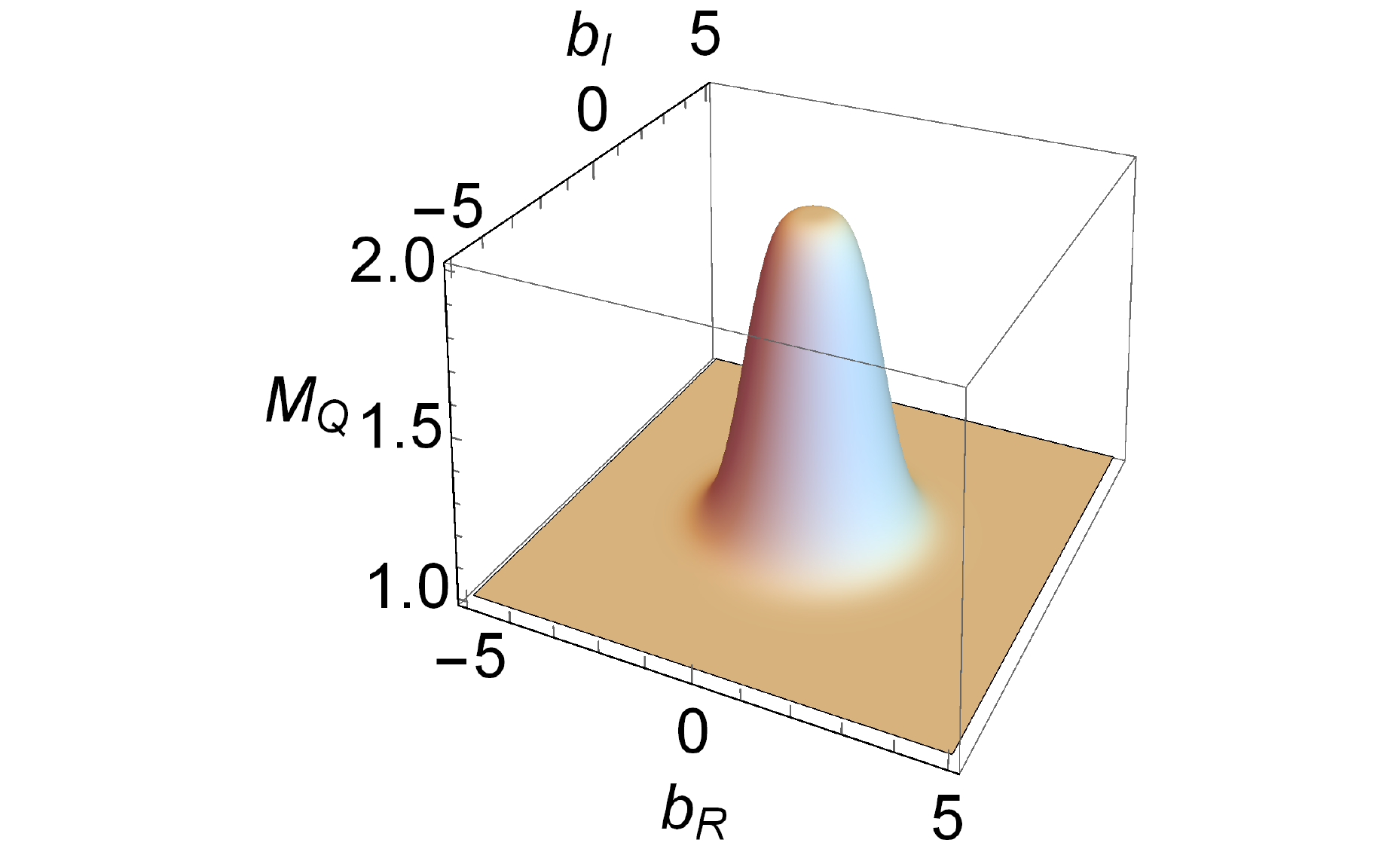}
(c)\includegraphics[scale=0.25]{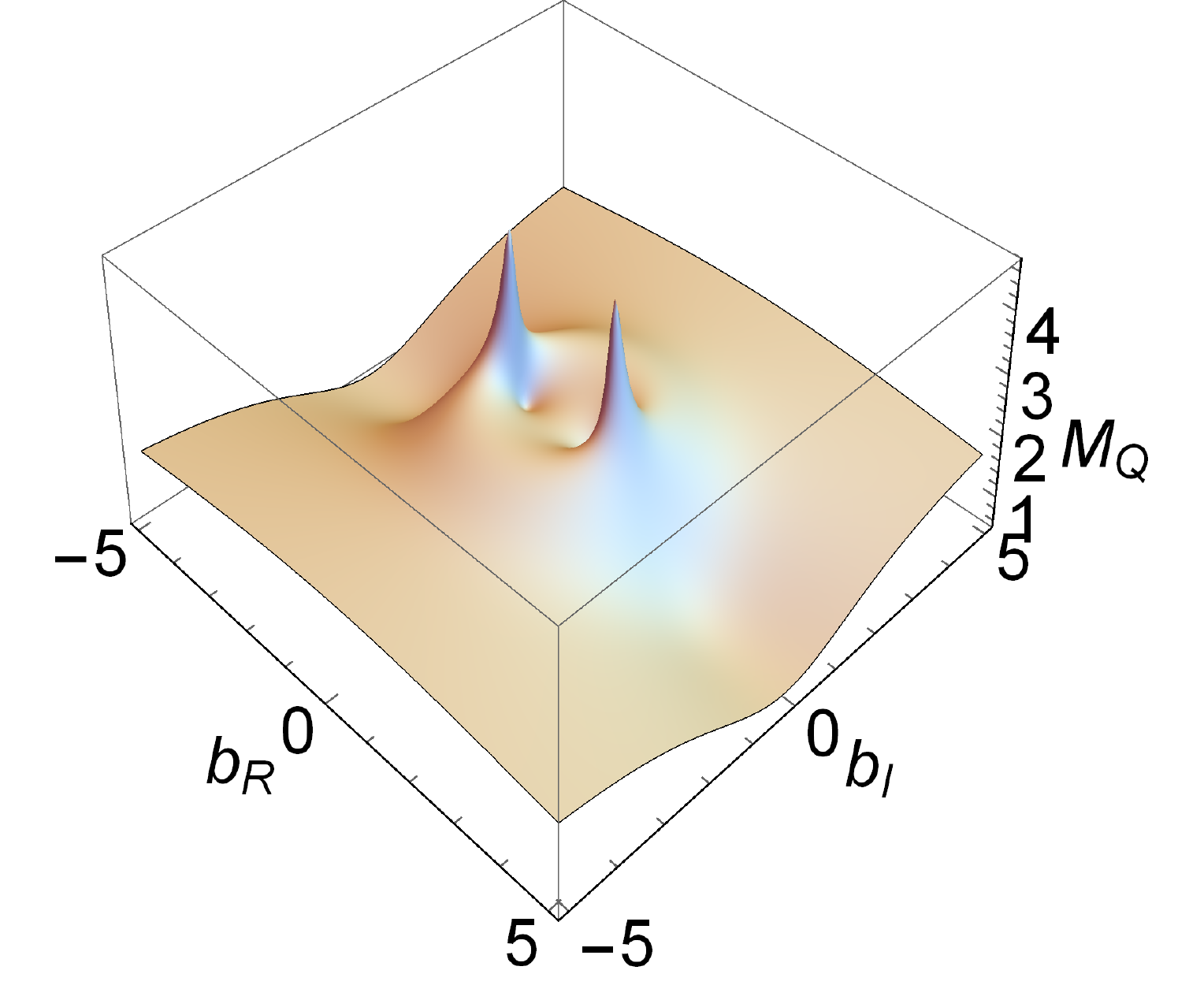}
(d)\includegraphics[scale=0.25]{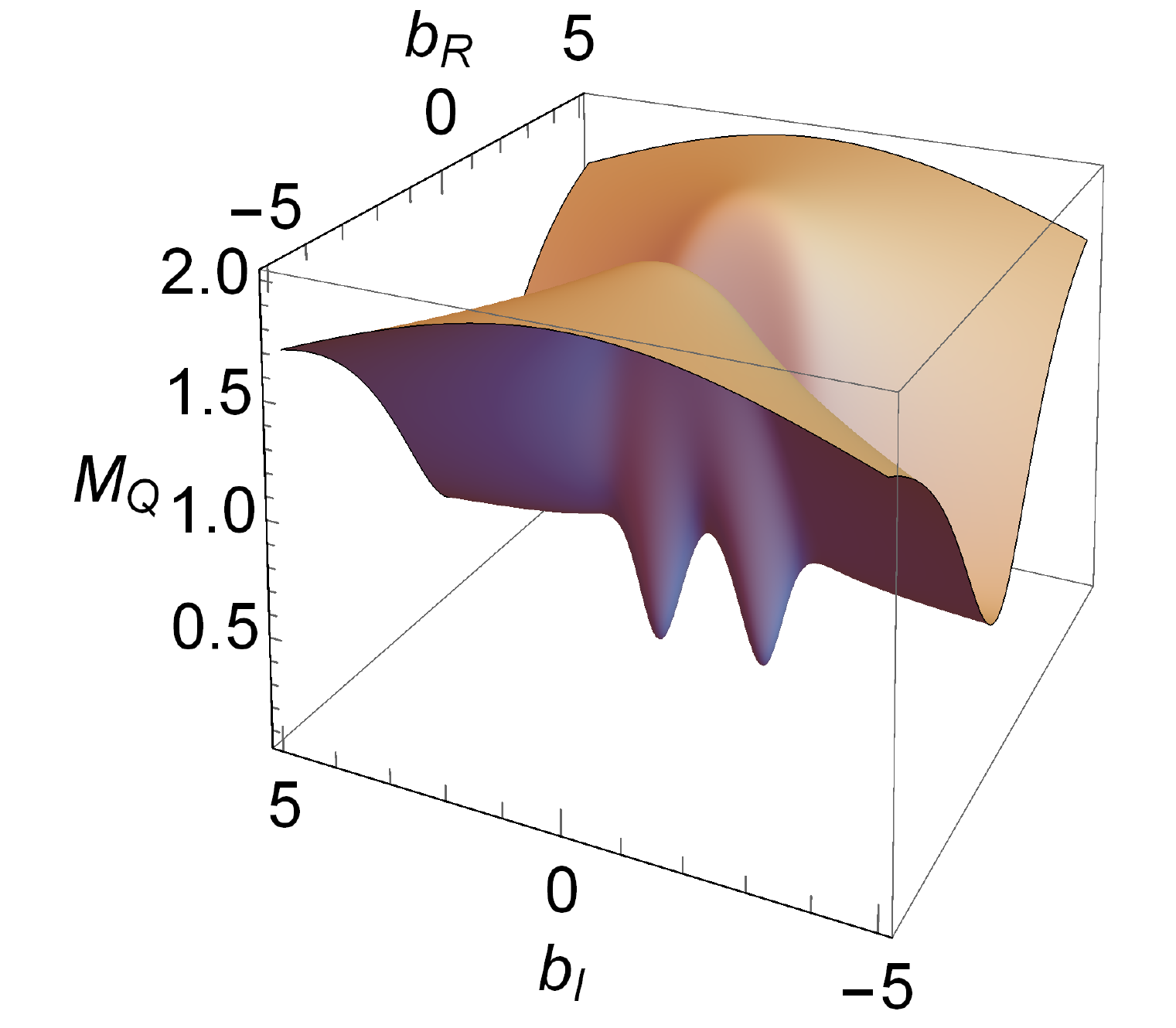}
(e)\includegraphics[scale=0.25]{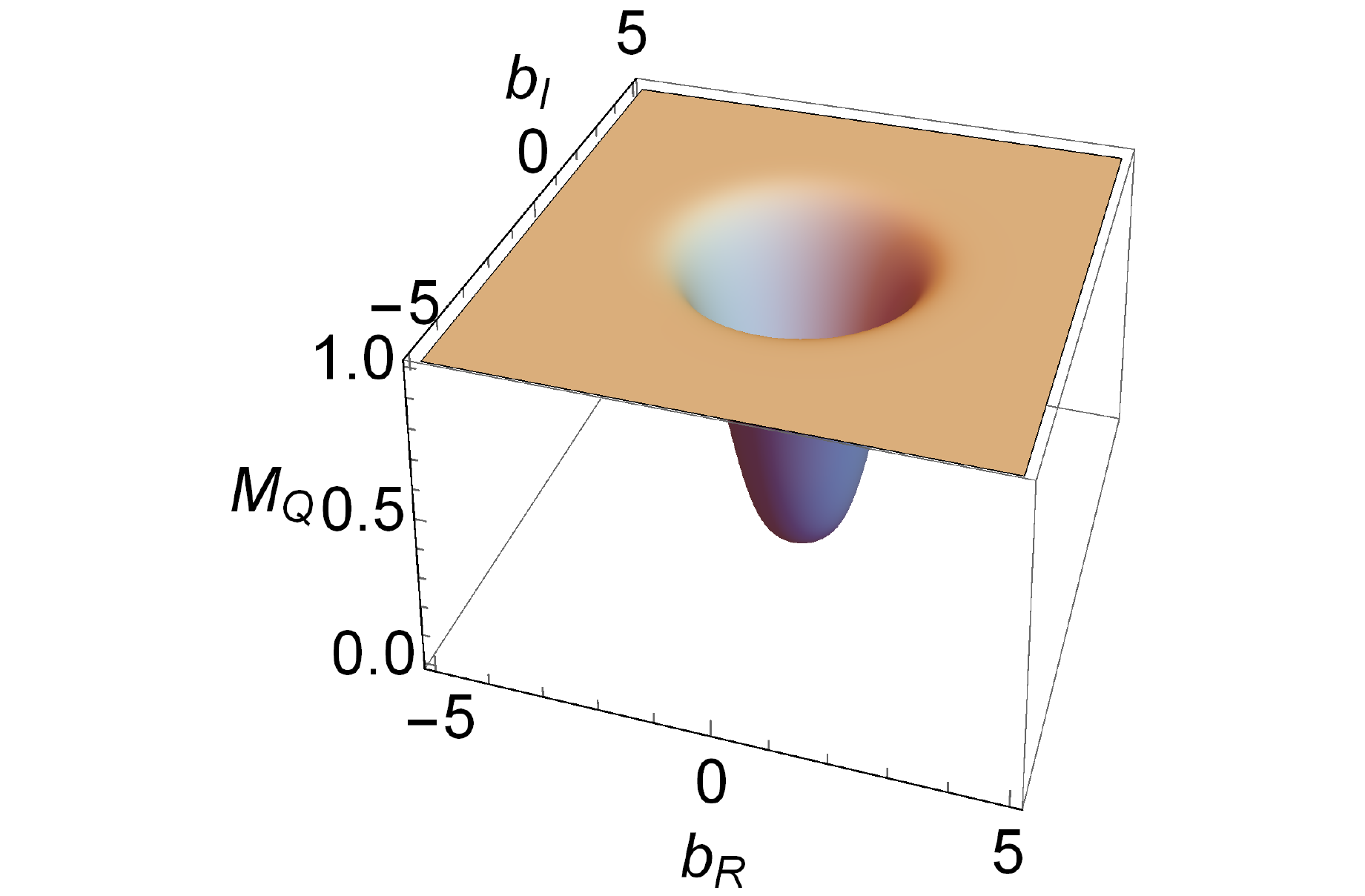}
(f)\includegraphics[scale=0.25]{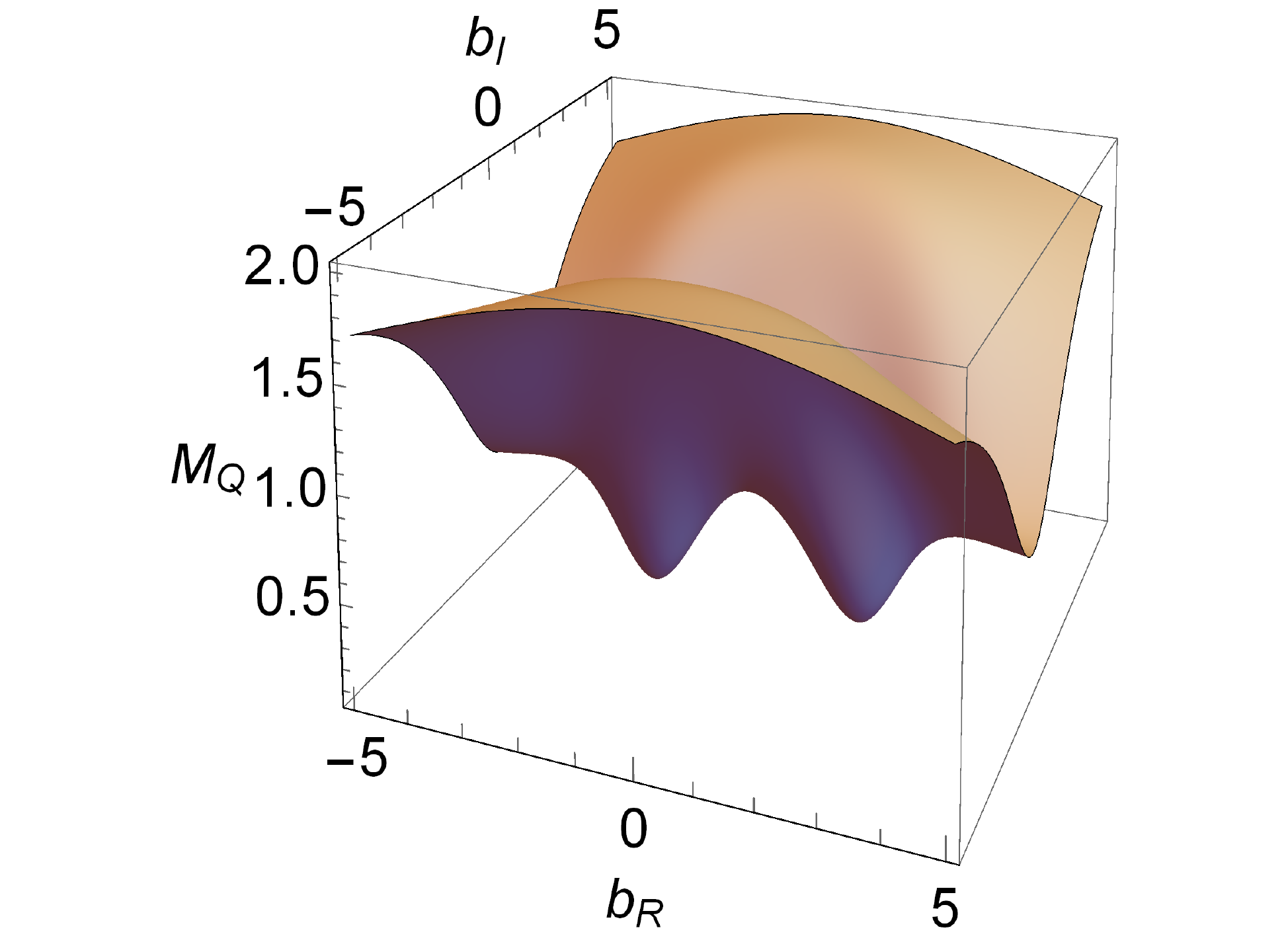}
\caption{Mandel parameter $M_Q$ as a function of the real and imaginary parts of the parameter $b=b_R+i b_I$, for the states associated to the cyclic group $C_2$, $\Psi_2^{(1)}(x)$ with (a) $a=1/4$, (b) $a=1/2$, (c) $a=1$; and for the state $\Psi_2^{(2)}(x)$ for (d) $a=1/4$, (e) $a=1/2$, and (f) $a=1$ . \label{mandel}}
\end{figure}
which have specific properties. In Fig.~\ref{mandel}, the Mandel parameter \cite{mandel} $M_Q=\langle (\Delta \hat{n})^2 \rangle/ \langle \hat{n} \rangle$ is shown for the cyclic Gaussian states of $C_2$ given in Eq.~(\ref{goga}). The figure was made taking into account three different $a$ parameters. 

A Mandel parameter $M_Q<1$ can be used to distinguish a subpoissonian from a superpoissonian  photon statistics ($M_Q>1$), or poissonian statistics $M_Q=1$. As it can be seen in the figure, the cyclic states can have subpoissonian distributions for certain regions of the parameter $b=b_R+i b_I$. As can be seen in the figure, the presence of this photon statistic is more prominent in the states associated to the second irreducible representation of the group $\Psi^{(2)}_2(x)$.

Similar to the states above, the ones associated to the cyclic group $C_3$ can be obtained. In Fig.~\ref{wignerc3}, the plots and contours for the Wigner function \cite{wigner}: $W_\psi(x,p)=\int \, dy \, \psi^*(x+y) \psi (x-y) e^{2ipy}/\pi$, can be seen. In the contour plots of the phase space ($p,x$) is noticed the symmetry of the state under the rotations with angles $0$, $2\pi/3$, and $4\pi/3$ with respect to the $x$ axis. It is also important to say that the Wigner functions depicted in the figure do not have inversion symmetry as they are only invariant under the rotations contained in the $C_3$ group.
\begin{figure}
\centering
\includegraphics[scale=0.25]{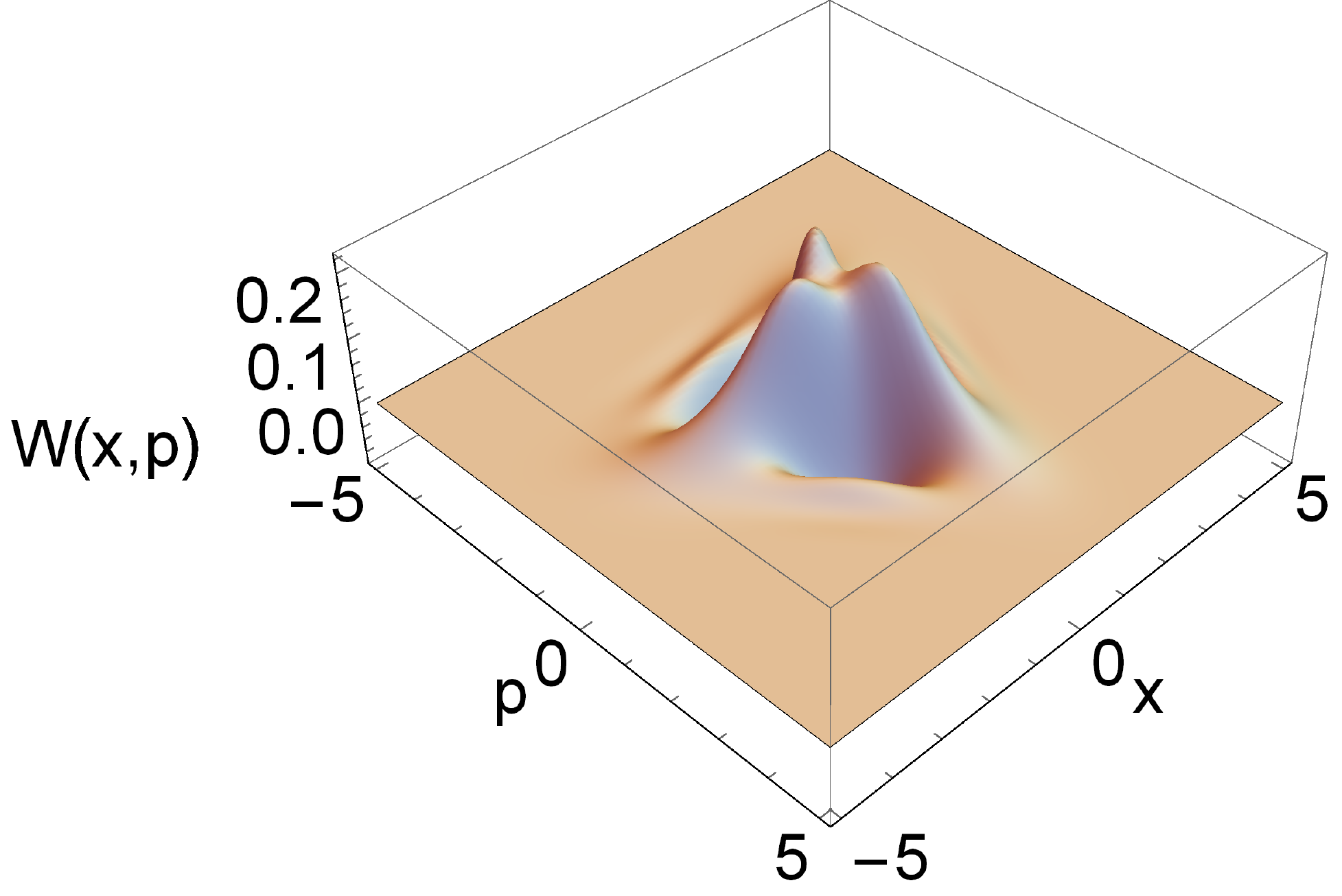}
\includegraphics[scale=0.25]{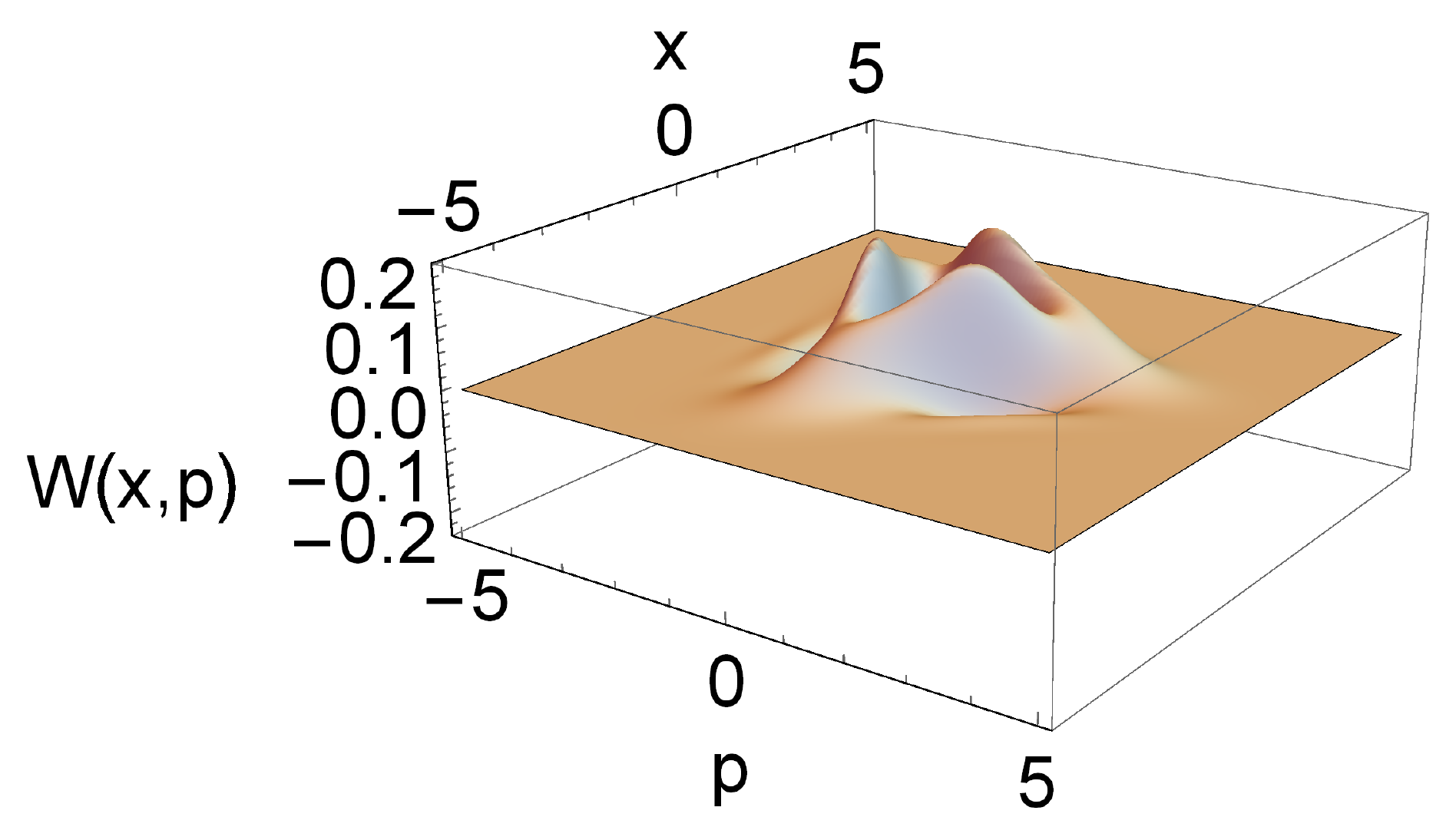}
\includegraphics[scale=0.25]{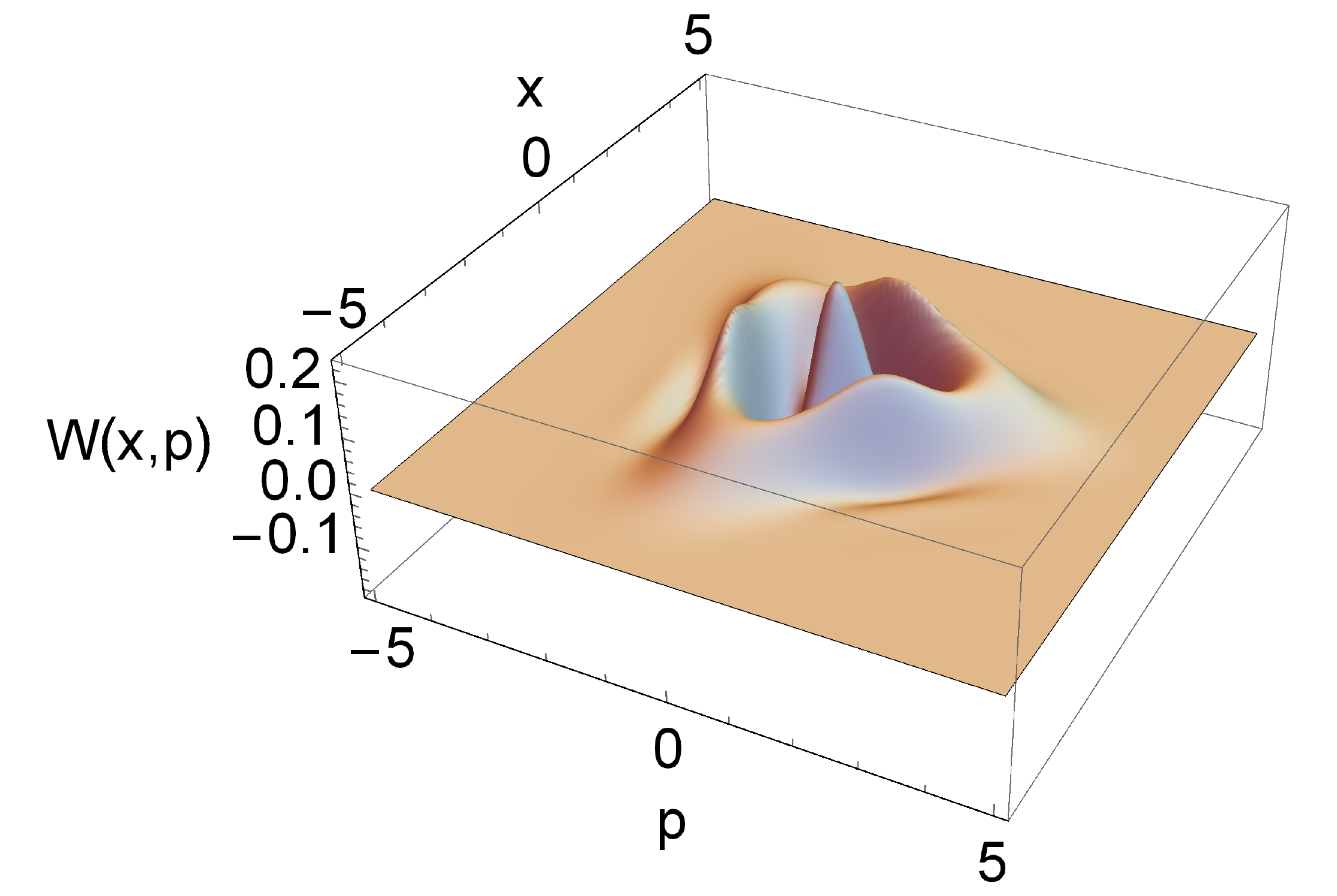}
\includegraphics[scale=0.22]{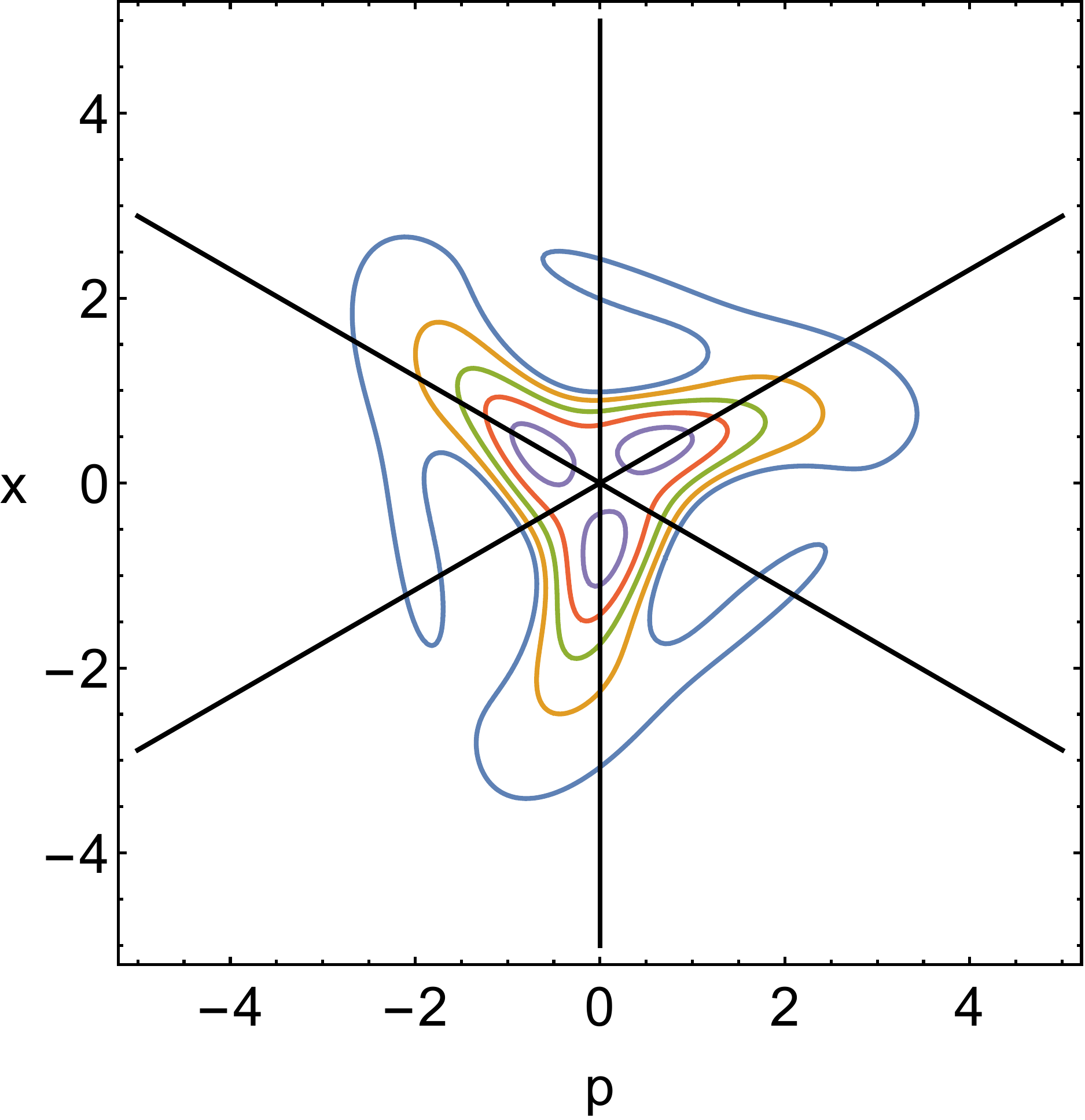}
\includegraphics[scale=0.22]{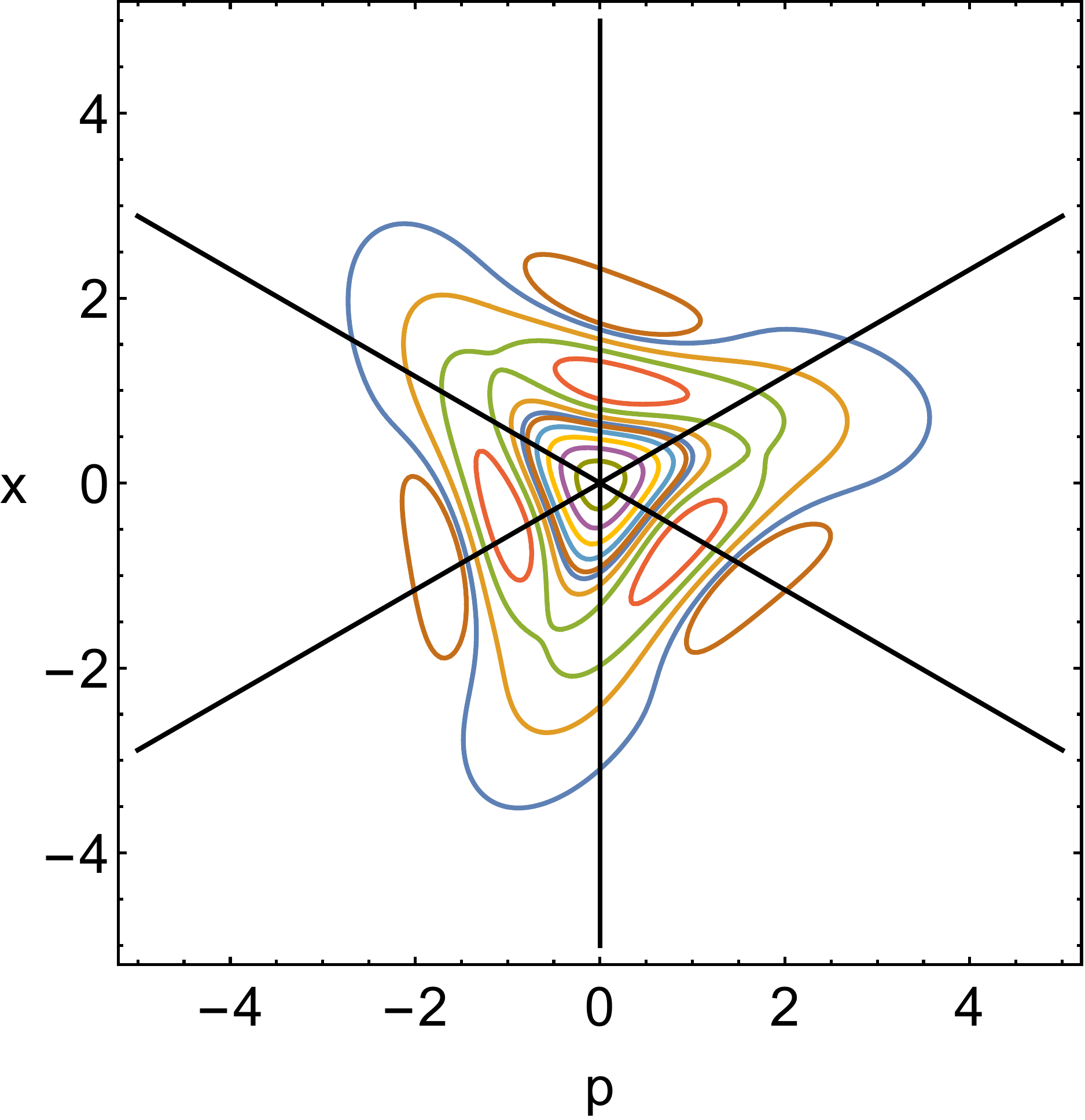}
\includegraphics[scale=0.22]{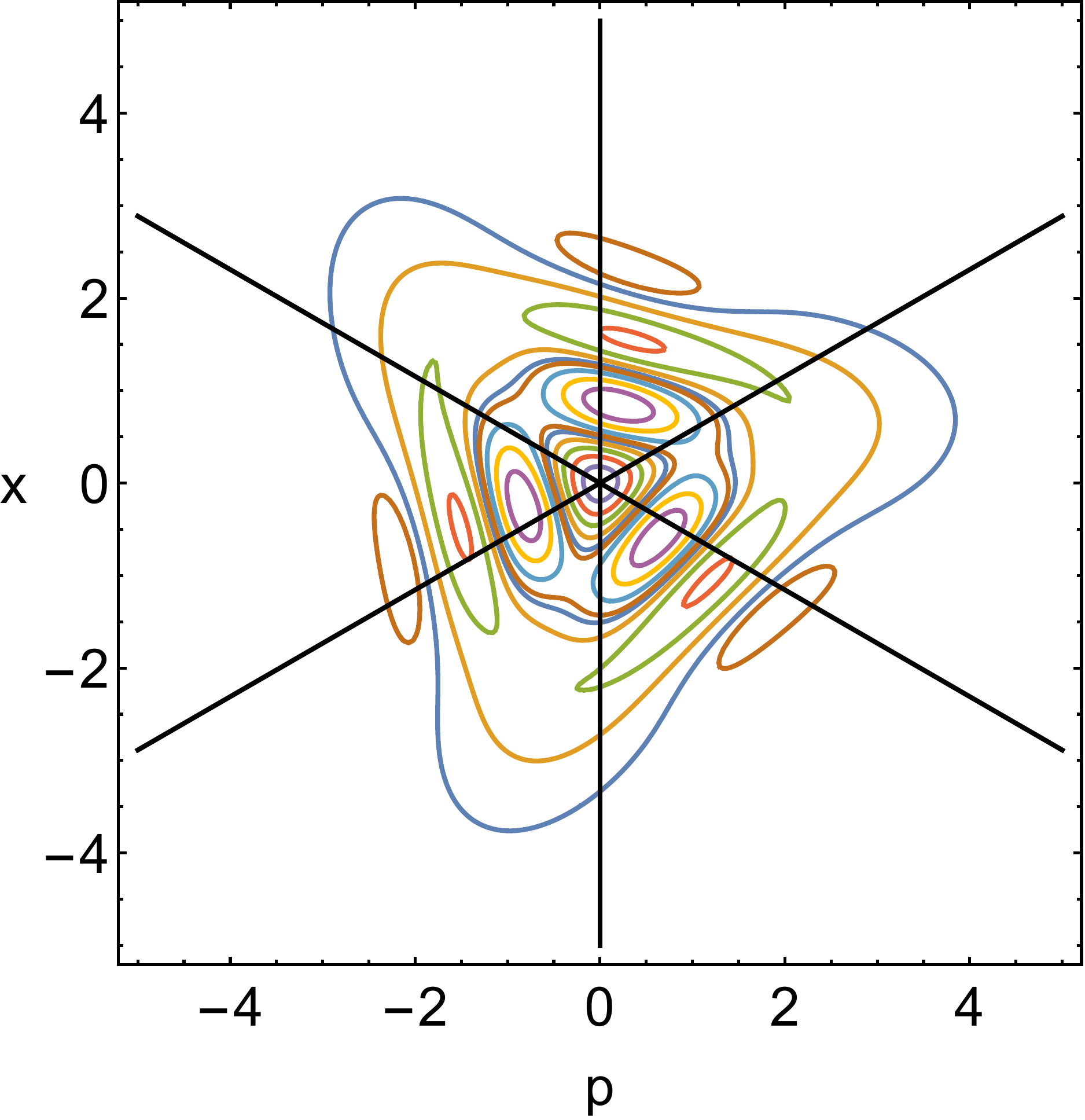}
\caption{Wigner functions and their contour plots for the cyclic Gaussian states associated to $C_3$ for the irreducible representations $\lambda=1$ (left), $\lambda=2$ (center), and $\lambda=3$ (right). For these figures the chosen parameters were $a=1$ and $b=\sqrt{2}(1+i)$. The black lines in the contour plots depict the symmetry axis associated to the $C_3$ group. \label{wignerc3}}
\end{figure}

\subsection{Circle symmetric states}
When one increases the degree of the cyclic group the obtained states described by our method must be invariant under more and more rotations in the phase space. It is known \cite{janszky} that there exist a correspondence between the circle symmetric states in the coherent case and the Fock number states. This lead us to the question, how do the generalized cyclic states associated to a very big number of symmetries look like? e.g., when the order of the cyclic group tends to infinite ($n\rightarrow \infty$), can they also be associated to the Fock states?. To answer these questions, one can notice that the Definition~\ref{defi1} of the cyclic states allow us to make a generalization in the case when the angle $\theta$, which determine the rotations $\hat{R}(\theta)$, becomes a continuous variable. In that case, the definition of the cyclic states becomes
\begin{equation}
\vert \psi_\infty^{(\lambda)}\rangle= \mathcal{N}_\lambda \int_0^{2\pi} d\theta \, e^{i\theta (\lambda-1)} e^{-i \theta \hat{n}} \vert \phi \rangle \, ,
\end{equation}
where we have an infinity number of irreducible representations, i.e., $\lambda \in \mathbb{Z}^+$. By means of the photon number decomposition of $\vert \phi \rangle=\sum_m A_m (\phi) \vert m \rangle$, one obtains
\[
\vert \psi_\infty^{(\lambda)}\rangle= \mathcal{N}_\lambda \sum_{m=0}^\infty \int_0^{2\pi} d\theta \, A_m (\phi) e^{i\theta (\lambda-1-m)} \vert m \rangle \, ,
\]
as the integral is equal to $2\pi$ times the Kronecker delta $\delta_{\lambda-1,m}$, we arrive to the result
\[
\vert \psi_\infty^{(\lambda)}\rangle= 2\pi \mathcal{N}_\lambda A_{\lambda-1}(\phi) \vert \lambda-1 \rangle \, ,
\]
which, after the renormalization process, we notice corresponds to the number state 
\begin{equation}
\vert \psi_\infty^{(\lambda)}\rangle=\vert \lambda-1 \rangle \, .
\end{equation}
We point out that this expression for the circle cyclic states is consistent with the erasure map of the state $\vert \phi \rangle$, as in principle we need to erase all the different states but the one that satisfies the condition $m-\lambda+1=0$. This result lead us to the conclusion that the cyclic superposition ($n \rightarrow \infty$) of any state which is noninvariant under any rotation in the phase space, is equal to a Fock state. This, regardless of the initial, noninvariant state $\vert \phi \rangle$ that we take into consideration. We would like to emphasize that in order of this property to be true, the state under consideration $\vert \phi \rangle$ must be noninvariant under all possible rotations in the phase space. This implies that $\vert \phi \rangle$ must be expressed by an infinite sum of the photon number states $\vert m \rangle$ with a nonzero probability amplitude $A_m(\phi)$. To show this we can take as an example the $C_2$ group. In order for a state $\vert \phi \rangle$ to be noninvariant under the $C_2$ rotation, it should be made by the superposition of at least two states $\vert m \rangle$ and $\vert n \rangle$, $m$ being even and $n$ being odd ($m,n\in \mathbb{Z}^+$). In the case of $C_3$ we need at least three states $\vert m \rangle$, $\vert n \rangle$, and $\vert l \rangle$ such mod$(m,3)=0$, mod$(n,3)=1$, and mod$(l,3)=2$ ($m,n,l\in \mathbb{Z}^+$). By the extension of this argument, we must need an infinite number of photon states in order for $\vert \phi \rangle$ to be an noninvariant state under $C_\infty$. As examples of this type of states one can name the coherent, the squeezed coherent, the non-centered Gaussian, and any noninvariant, continuous variable state. 

To show that the superposition of several rotations of an initial continuous variable system can form a Fock state one can take as an example the Gaussian state of Eq. (\ref{gaus}) with $a=1$, $b=\sqrt{6}+2 i$. In Fig.~\ref{wigcirc} are shown the Wigner functions and their contours for the cyclic states associated to the first irreducible representation of $C_n$ for $n=10$ (left), $n=15$ (center), and $n=20$ (right). Here, one can see how the cyclic states for a long degree order are more and more alike to the vacuum state $\vert 0 \rangle$. Additionally to this, it can be checked that for a given irreducible representation of the cyclic group, a different photon state can be formed for a sufficient large number $n$, i.e., the cyclic group degree.
\begin{figure}
\centering
\includegraphics[scale=0.30]{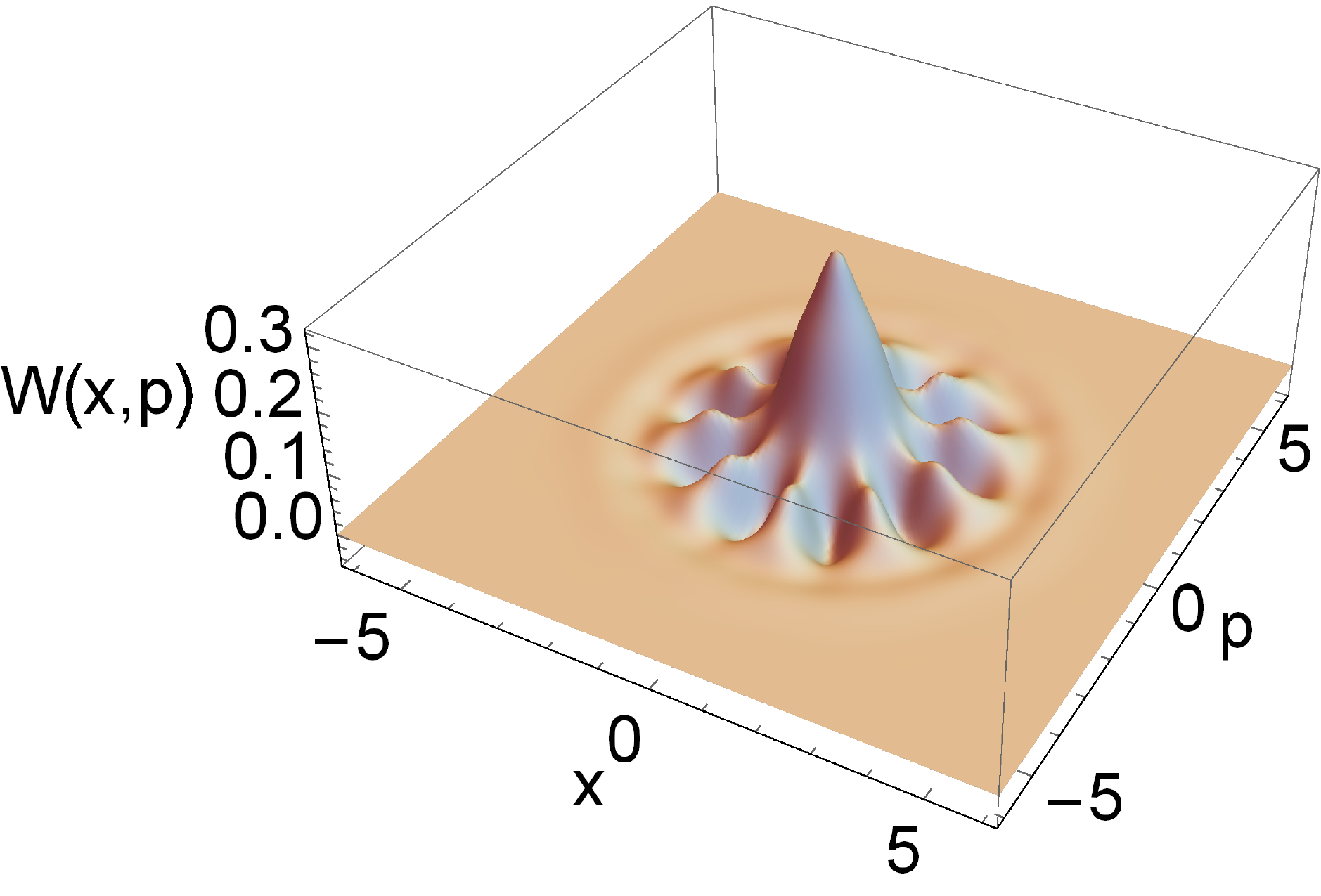}
\includegraphics[scale=0.30]{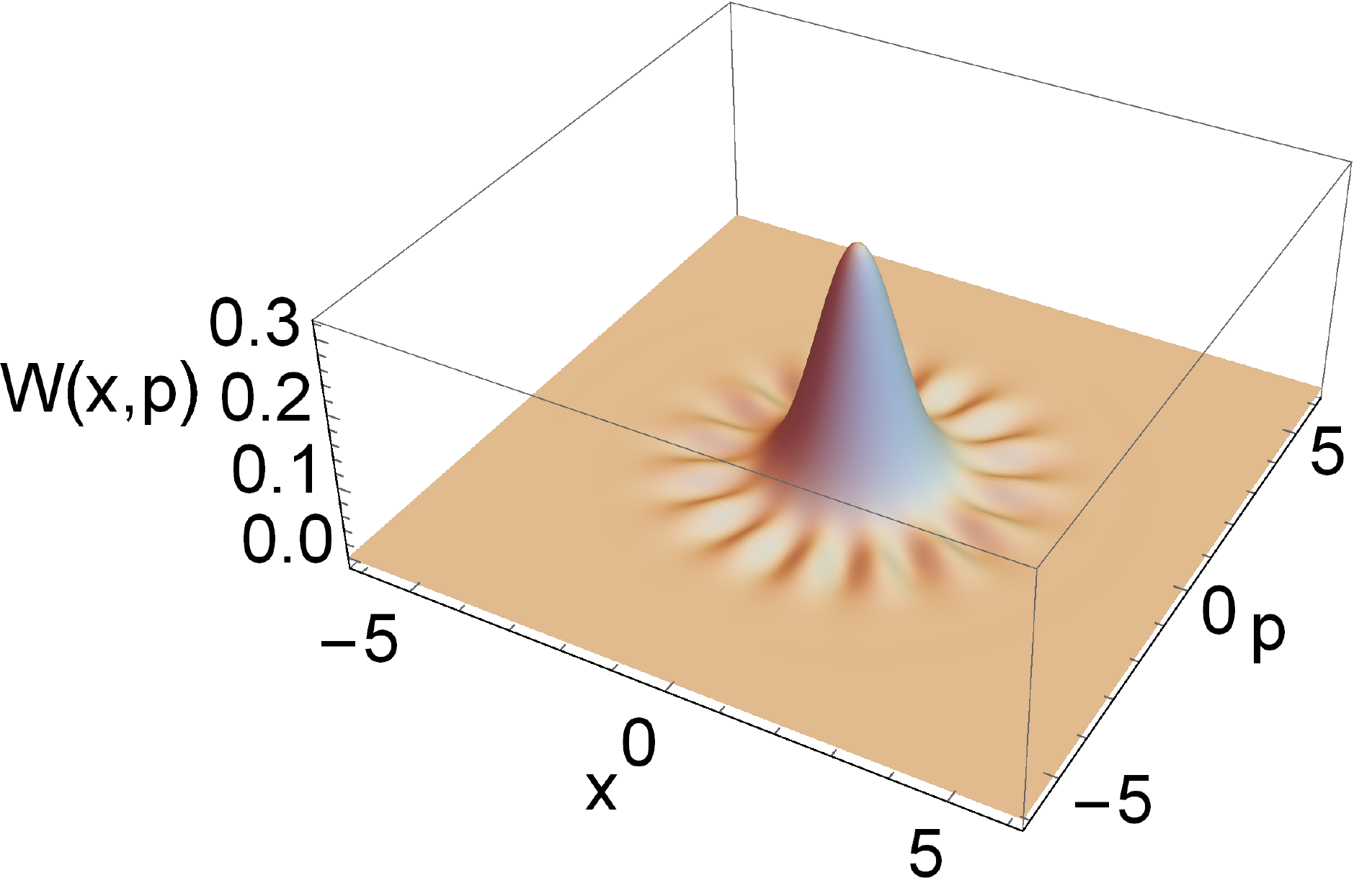}
\includegraphics[scale=0.30]{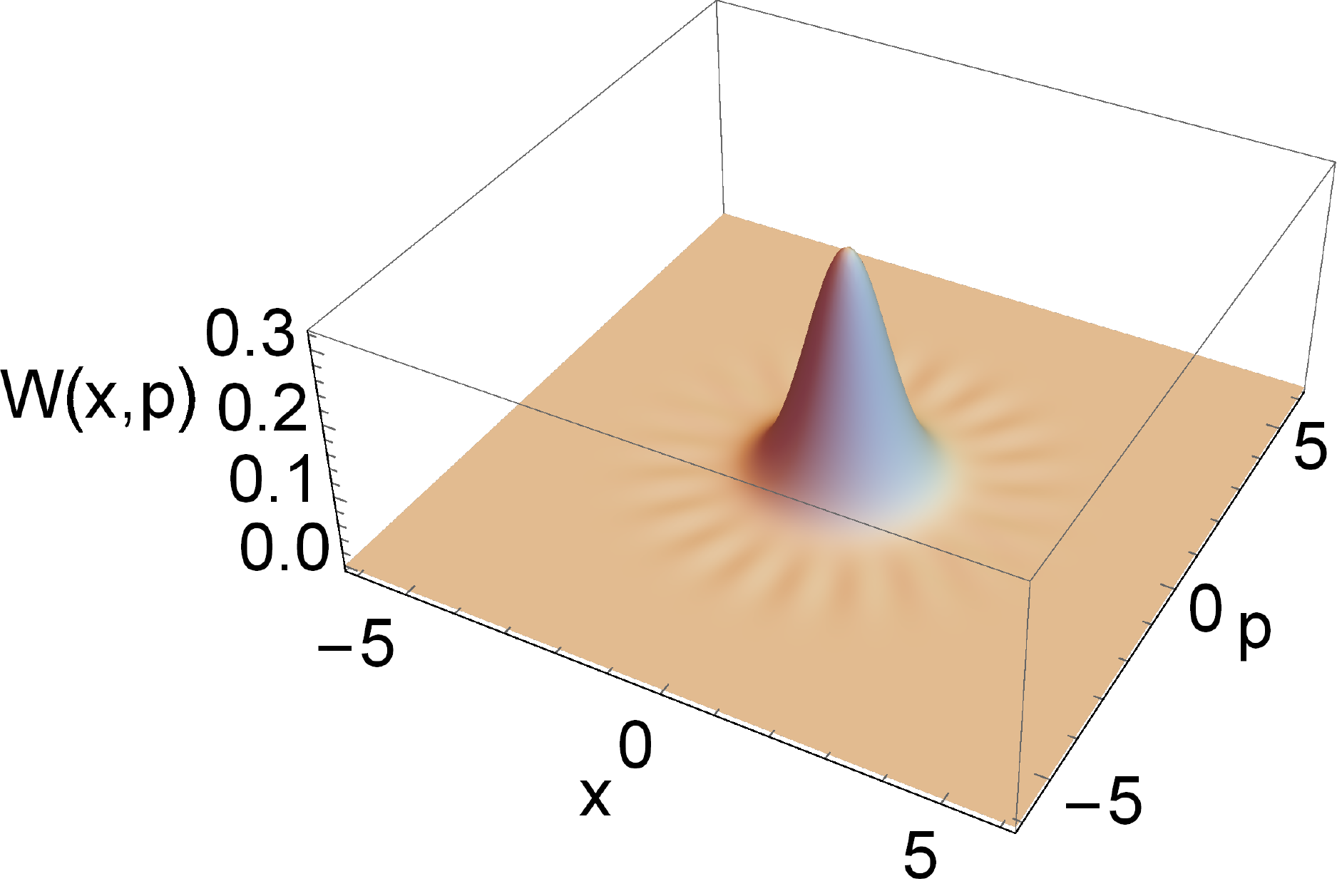}\\
\includegraphics[scale=0.35]{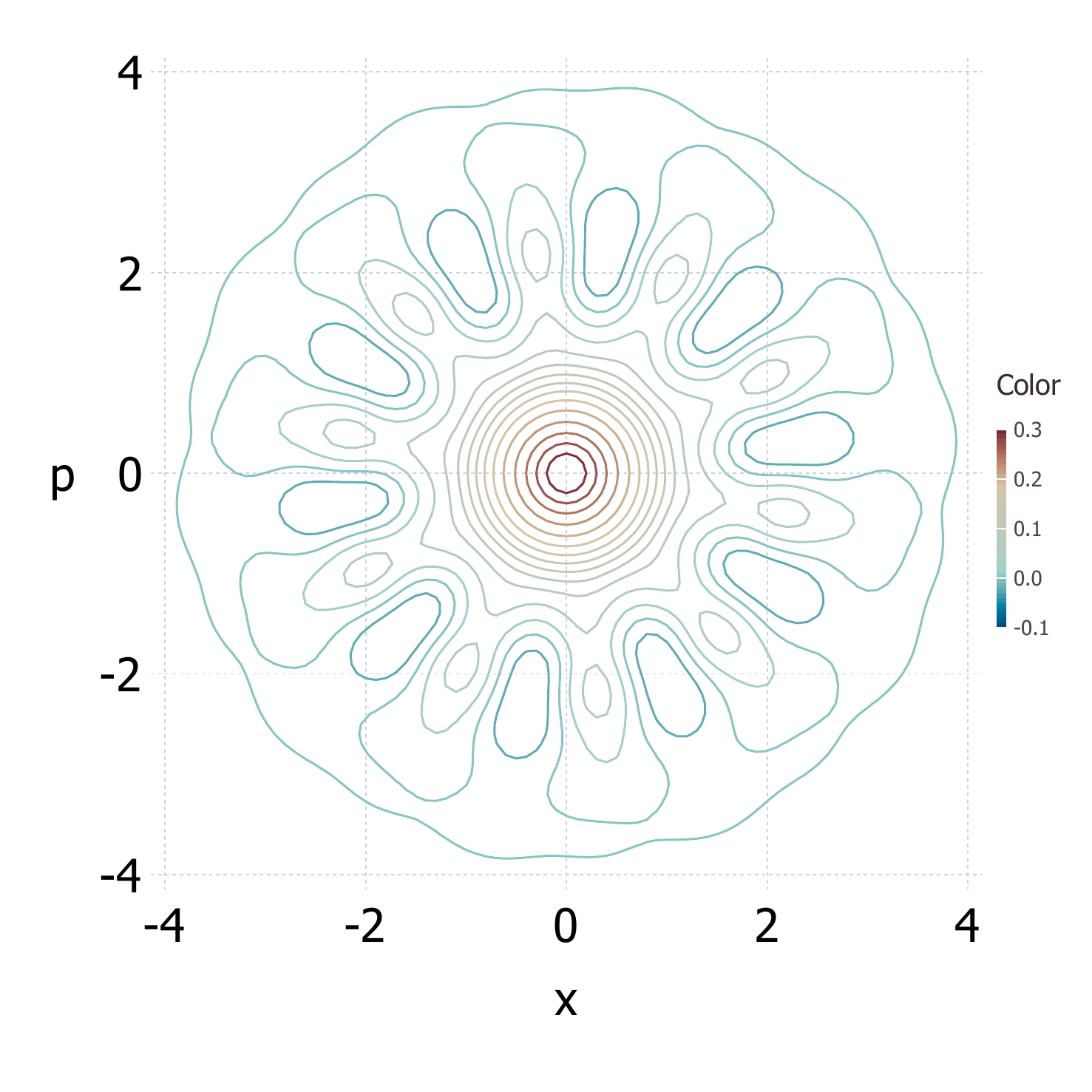}
\includegraphics[scale=0.35]{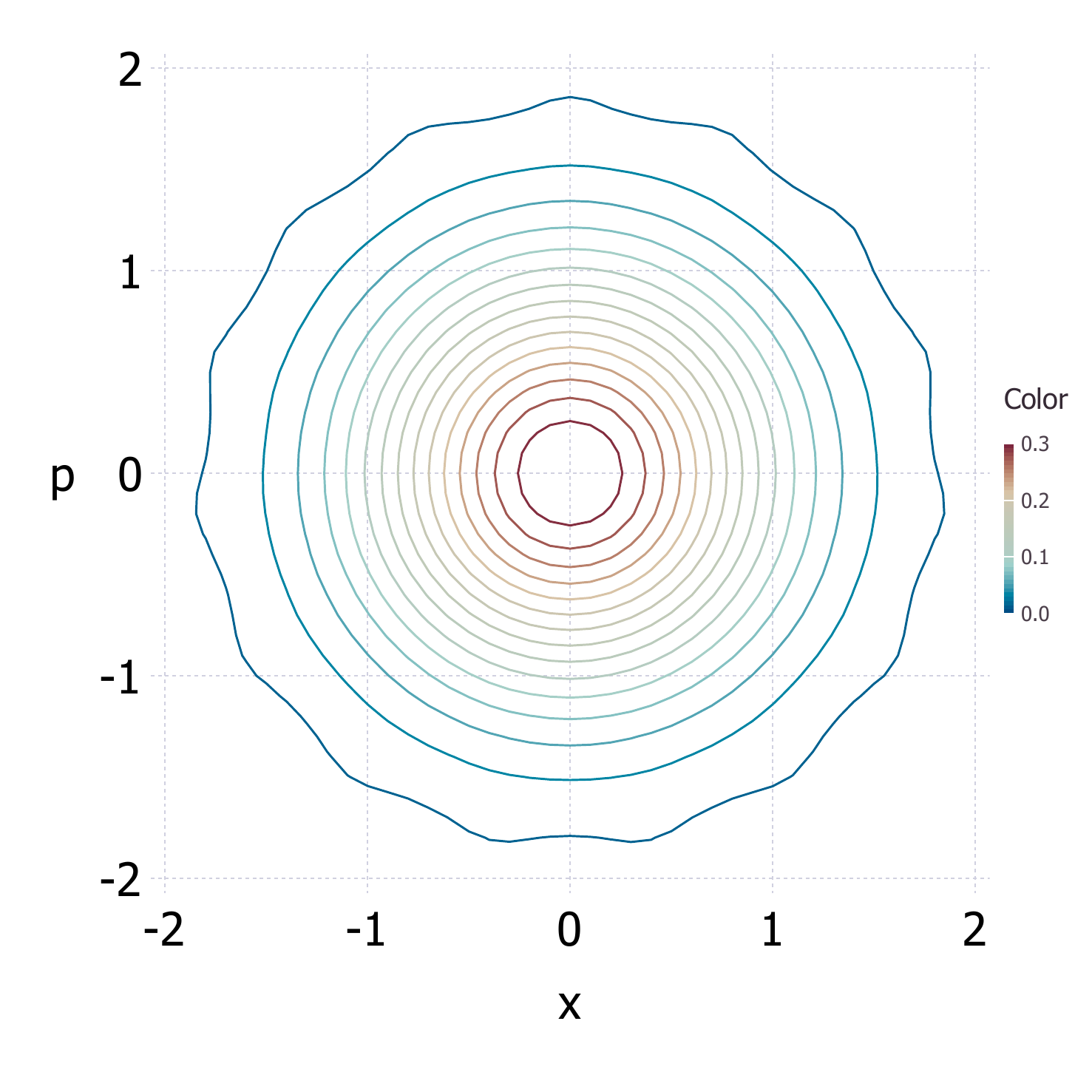}
\includegraphics[scale=0.35]{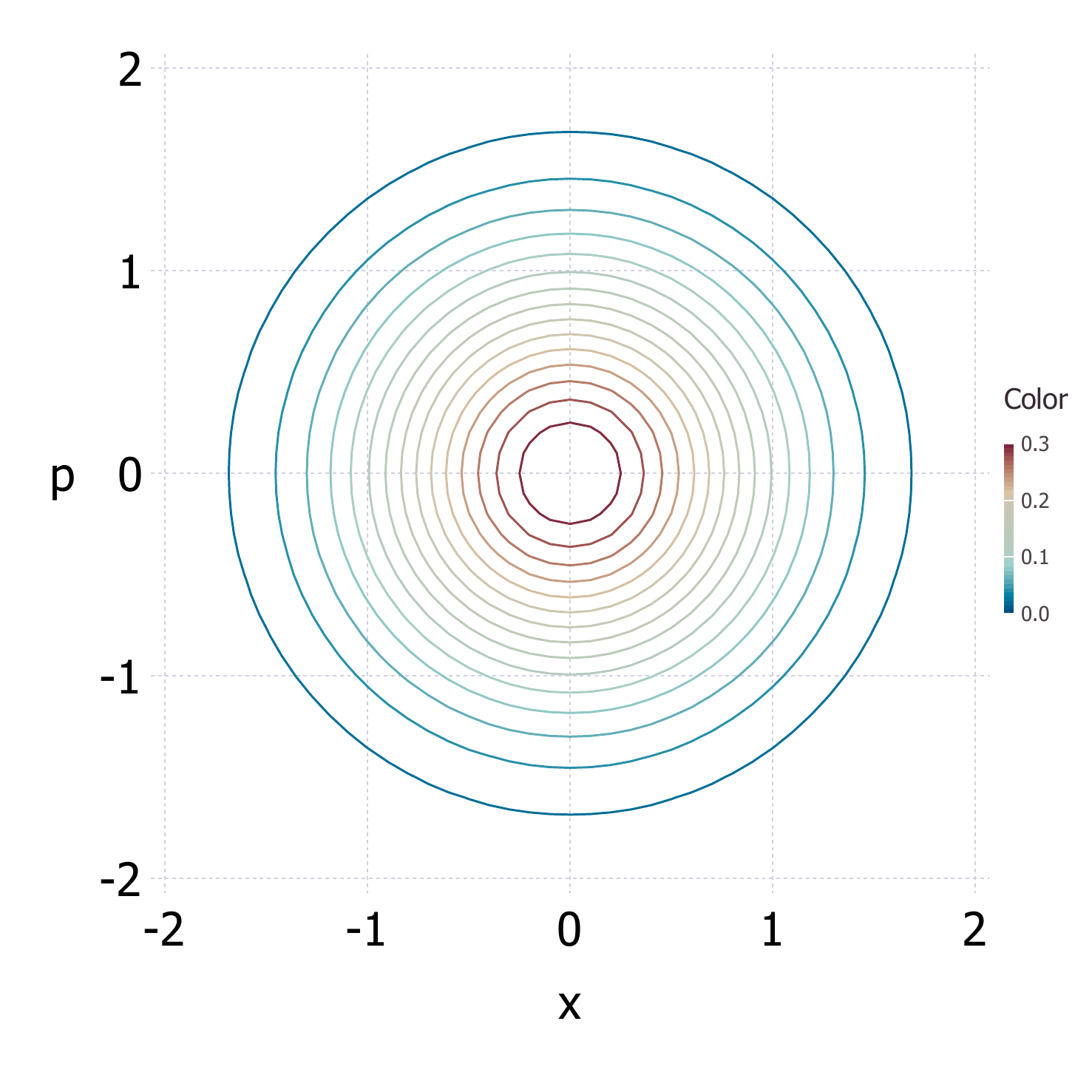}

\caption{Wigner functions and their contour plots for the cyclic Gaussian state of $C_n$ for the irreducible representation $\lambda=1$ for (a) $n=10$ (right), (b) $n=15$ (center), and (c) $n=20$ (left). In all the plots we took the parameters $a=1$ and $b=\sqrt{6}+2 i$. \label{wigcirc}}
\end{figure}

\section{Cyclic group density matrices.}

The previous discussion about the properties of the erasure map and its relation with the states associated to the cyclic groups can be extended to any kind of state which is not invariant under the rotation operation. For example, one can can think in a density matrix which may correspond to a mixed state $\hat{\rho}$ and define the following cyclic density matrices

\begin{mydef}
Let $\hat{\rho}$ be a density matrix with at least one of its mean quadrature components ($\hat{x}=(\hat{a}+\hat{a}^\dagger)/\sqrt{2}$, $\hat{p}=i(\hat{a}^\dagger-\hat{a})/\sqrt{2}$) different from zero, i.e., ${\rm Tr}(\hat{\rho}\, \hat{x})\neq 0$, or ${\rm Tr}(\hat{\rho}\, \hat{p})\neq 0$. Then the state associated to the irreducible representation $\lambda$ of  the cyclic group $C_n$ is defined as
\begin{equation}
\hat{\rho}^{(\lambda)}_n=\mathcal{N}_\lambda \sum_{r,s=1}^n \chi_n^{(\lambda)} (g_r) \chi_n^{*(\lambda)}(g_s) \hat{R}(\theta_r) \hat{\rho} \hat{R}^\dagger(\theta_s) \, ,
\label{rhoo}
\end{equation}
where $\chi_n^{(\lambda)} (g_r)$ is the character for the  group element $g_r$, $\hat{R}(\theta_r)=\exp{(-i \theta_r \hat{n})}$, and
\[
\mathcal{N}_\lambda^{-1}=\sum_{r,s=1}^n \chi_n^{(\lambda)} (g_r) \chi_n^{*(\lambda)}(g_s) \, {\rm Tr}(\hat{R}(\theta_r) \hat{\rho} \hat{R}^\dagger(\theta_s)) \, .
\]
\end{mydef}
These type of density matrices have the same properties of the cyclic states as being invariant up to a phase under the rotations in the cyclic group. Also, they have a photon distribution were not all the photon numbers are present as they can be obtained by the elimination of certain Fock states. To show this, one can follow an analogous procedure as in Theorem (\ref{teo2}). Let us suppose $\hat{\rho}=\sum_{m,m'=0}^\infty A_{m,m'}(\hat{\rho}) \vert m \rangle \langle m' \vert$, with ${\rm Tr}(\hat{\rho})=\sum_{m=0}^\infty A_{m,m}(\hat{\rho})=1$. This expression together with Eqs. (\ref{chi}) and (\ref{rhoo}) allow us to rewrite $\hat{\rho}_n^{(\lambda)}$ as follows
\[
\hat{\rho}_n^{(\lambda)}=\mathcal{N}_\lambda \sum_{m,m'=0}^\infty A_{m,m'}(\hat{\rho})\sum_{r,s=1}^n \mu_n^{(\lambda-1)(r-1)} \mu_n^{(\lambda-1)(1-s)} e^{-i \theta_r m} e^{i \theta_s m'} \vert m \rangle \langle m' \vert \, ,
\]
by the use of the definition of $\theta_j=2\pi (j-1)/n$ and Theorem \ref{tt1}, we can perform the sums over the $r$ and $s$ parameters. Those sums are
\begin{eqnarray}
\sum_{r=1}^n \mu_n^{(\lambda-1-m)r}&=&n \, \delta_{{\rm mod}(\lambda-1-m,n),0} \, , \nonumber \\
\sum_{s=1}^n \mu_n^{-(\lambda-1-m')s} &=& n \, \delta_{{\rm mod}(\lambda-1-m',n),0} \, , 
\end{eqnarray}
then we finally can write the cyclic state density matrices as follows
\[
\hat{\rho}_n^{(\lambda)}=\mathcal{N}_\lambda \, n^2 \sum_{m.m'=0}^\infty A_{m,m'}(\hat{\rho}) \, \mu_n^{m'-m}\,  \delta_{{\rm mod}(\lambda-1-m,n),0} \, \delta_{{\rm mod}(\lambda-1-m',n),0} \, \vert m \rangle \langle m' \vert \, ,
\]
as the delta functions imply that $\lambda-1-m$ and $\lambda-1-m'$ should be a multiple of $n$, then $\lambda-1-m=\eta n$, and $\lambda-1-m'=\xi n$ and then $m'-m=-(\xi+\eta)n$ is also a multiple of $n$. From these properties, we can conclude that $\mu_n^{m'-m}=1$ and finally arrive to the expression for the cyclic density matrix
\begin{equation}
\hat{\rho}_n^{(\lambda)}=\mathcal{N}_\lambda \, n^2 \sum_{m.m'=0}^\infty A_{m,m'}(\hat{\rho}) \, \delta_{{\rm mod}(\lambda-1-m,n),0} \, \delta_{{\rm mod}(\lambda-1-m',n),0} \, \vert m \rangle \langle m' \vert \, ,
\label{rhot}
\end{equation}
this property is summarized in the following theorem:
\begin{theorem}
Let  $n$ and $\lambda$ be two positive integers with $\lambda\leq n$, and $\hat{\rho}_{n,\lambda} $ be the renormalized state obtained after the elimination of the number states operators $\vert m \rangle \langle m' \vert$ in $\hat{\rho}=\sum_{m,m'=0}^\infty A_{m,m'} (\hat{\rho}) \vert m \rangle \langle m' \vert$, which do not satisfy the conditions ${\rm mod}(\lambda-1-m,n)=0$ and ${\rm mod}(\lambda-1-m',n)=0$, then $\hat{\rho}_{n,\lambda} (\phi)\rangle$ is equal to the cyclic state $\hat{\rho}_n^{(\lambda)}$.
\label{teo3}
\end{theorem}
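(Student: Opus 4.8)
The plan is to mirror the strategy of Theorem~\ref{teo2}, but now at the level of density matrices, exploiting the fact that the closed form in Eq.~(\ref{rhot}) for $\hat{\rho}_n^{(\lambda)}$ has already been obtained in the discussion preceding the statement. The entire argument then reduces to showing that the renormalized erasure operator $\hat{\rho}_{n,\lambda}$ and the cyclic density matrix $\hat{\rho}_n^{(\lambda)}$ are proportional to the same unnormalized operator, and to fixing the proportionality constant by the trace condition.

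First I would write the erasure state explicitly. By its definition, $\hat{\rho}_{n,\lambda}$ keeps only those matrix elements $\vert m \rangle \langle m' \vert$ of $\hat{\rho}$ for which both ${\rm mod}(\lambda-1-m,n)=0$ and ${\rm mod}(\lambda-1-m',n)=0$, followed by renormalization,
\[
\hat{\rho}_{n,\lambda}=N_{\lambda,n}\sum_{m,m'=0}^\infty A_{m,m'}(\hat{\rho})\,\delta_{{\rm mod}(\lambda-1-m,n),0}\,\delta_{{\rm mod}(\lambda-1-m',n),0}\,\vert m \rangle \langle m' \vert \, ,
\]
where $N_{\lambda,n}$ is fixed by ${\rm Tr}(\hat{\rho}_{n,\lambda})=1$. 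Second I would invoke Eq.~(\ref{rhot}), which already expresses $\hat{\rho}_n^{(\lambda)}$ as the same sum times the scalar $\mathcal{N}_\lambda\, n^2$. The two operators therefore have identical content and differ only by the overall constants $N_{\lambda,n}$ and $\mathcal{N}_\lambda n^2$. Since both $\hat{\rho}_{n,\lambda}$ and $\hat{\rho}_n^{(\lambda)}$ are normalized density matrices proportional to one and the same operator, equating their traces forces $N_{\lambda,n}=\mathcal{N}_\lambda n^2$, and hence $\hat{\rho}_{n,\lambda}=\hat{\rho}_n^{(\lambda)}$.

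I expect the main obstacle to lie not in this final comparison but in the step already performed before the theorem, namely the cancellation of the residual phase on the coherences. In the pure-state Theorem~\ref{teo2} the identification held only up to the phase $\mu_n^{1-\lambda}$ coming from the single rotation sum. In the density-matrix setting of Eq.~(\ref{rhoo}) both a rotation $\hat{R}(\theta_r)$ and a conjugate rotation $\hat{R}^\dagger(\theta_s)$ act, generating a factor $\mu_n^{1-\lambda}$ from the ket side and $\mu_n^{-(1-\lambda)}$ from the bra side. The decisive observation is that on the surviving support both $\lambda-1-m$ and $\lambda-1-m'$ are multiples of $n$, so $m'-m$ is a multiple of $n$ and consequently $\mu_n^{m'-m}=1$; this is exactly what cancels the phase and upgrades the ``up to a phase'' of Theorem~\ref{teo2} to an exact equality here. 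Once this cancellation is secured, the remaining trace-matching argument is elementary.
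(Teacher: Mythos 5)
Your proposal is correct and follows essentially the same route as the paper: the paper's own proof of Theorem~\ref{teo3} is precisely the derivation of Eq.~(\ref{rhot}) that precedes the statement (Fock expansion of $\hat{\rho}$, character sums performed via Theorem~\ref{tt1}, and the cancellation $\mu_n^{m'-m}=1$ on the surviving support because $m'-m$ is a multiple of $n$), with the theorem then stated as a summary of that computation. Your explicit trace-matching step $N_{\lambda,n}=\mathcal{N}_\lambda\, n^2$ only makes precise the final identification with the renormalized erasure state, which the paper leaves implicit.
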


It is noteworthy to see that from Eq. (\ref{rhot}) and the property $m'-m$ being a multiple of $n$, we can immediately show that the cyclic density matrices are invariants over the rotations in the cyclic groups. In other words, the density matrix $\hat{\rho}_n^{(\lambda)}$ after the rotation $\hat{R}(\theta_j)$, i.e.,
\[
\hat{R}(\theta_j)\hat{\rho}_n^{(\lambda)} \hat{R}^\dagger (\theta_j)=\mathcal{N}_\lambda \, n^2 \sum_{m.m'=0}^\infty A_{m,m'}(\hat{\rho}) \, \delta_{{\rm mod}(\lambda-1-m,n),0} \, \delta_{{\rm mod}(\lambda-1-m',n),0} \, \mu_n^{(m'-m)(j-1)}\, \vert m \rangle \langle m' \vert \, ,
\]
is equal to the initial density matrix, so finally one can establish
\[
\hat{R}(\theta_j)\hat{\rho}_n^{(\lambda)} \hat{R}^\dagger (\theta_j)=\hat{\rho}_n^{(\lambda)} \, .
\]

As in the case of the pure cyclic states, the photon number distribution of the cyclic density matrices contains only some of the numbers states. Given that the different states associated to the cyclic group $C_n$ are made with different photon number states, we can conclude that the cyclic density matrices form an orthogonal set.

\section{Example: Calculation of the entanglement in a bipartite state.}

As an example of the applications of the cyclic states we show that this type of states can be used to describe a continuous variable system in a discrete way, and that this discrete form can lead to an easier calculation of parameters, such as the entanglement between parts in a bipartite system. Suppose a two mode state made entirely of the group of rotation states $\{\vert \phi_r \rangle_1, \vert \varphi_r \rangle_2 ; r=1, \ldots, n\}$ for modes 1 and 2 respectively, e.g. the state
\begin{equation}
\vert T\rangle= \sum_{r=1}^n c_r \vert \phi_r \rangle_1 \vert \varphi_r \rangle_2 \, , \quad \sum_{r,r'=1}^n c_r c_{r'}^* \langle \phi_{r'} , \varphi_{r'} \vert \phi_r, \varphi_r\rangle =1 \, .
\label{tstate}
\end{equation}

As the states $\vert \phi_r \rangle=\hat{R}(\theta_r)\vert \phi \rangle$, $\vert \varphi_r \rangle=\hat{R}(\theta_r)\vert \varphi \rangle$ can be general then they might not be orthogonal. On the other hand, the cyclic states generated by these states form an orthogonal set. Most importantly, as there exist the same number of cyclic states $\vert \psi_n^{(\lambda)} (\phi)\rangle$ and $\vert \psi_n^{(\lambda)}(\varphi) \rangle$ as the number of rotated states  $\vert \phi_r \rangle$ and $\vert \varphi_r \rangle$, then one can obtain the rotated states in terms of the cyclic, orthogonal ones. To obtain these expressions one must obtain the inverse relation of Eq. (\ref{ccy})
\[
\vert \psi_n^{(\lambda)}(\phi) \rangle=\mathcal{N}_\lambda \sum_{r=1}^n \mu_n^{(\lambda-1)(r-1)} \vert \phi_r \rangle \, ,
\]
to do that, one can treat the characters of the group as a matrix $M_{jk}=\mu_n^{(j-1)(k-1)}$, which has an inverse matrix $M_{jk}^{-1}=\mu_n^{(1-j)(k-1)}/n$. By this expression one can obtain the inverse equation
\begin{equation}
\vert \phi_r \rangle = \frac{1}{n \, \mathcal{N}_\lambda} \sum_{\lambda=1}^n \mu_n^{(1-r)(\lambda-1)} \vert \psi_n^{(\lambda)}(\phi) \rangle \, .
\label{innv}
\end{equation}
By substituting this expression and an analogous expression for $\vert \varphi_r \rangle$ into the two-mode state $\vert T \rangle$, one obtains
\[
\vert T \rangle =\frac{1}{n^2}\sum_{r=1}^n c_r \sum_{\lambda,\lambda'=1}^n \frac{1}{\mathcal{N}_\lambda \mathcal{N}_{\lambda'}}\mu_n^{(1-r)(\lambda-1)} \mu_n^{(1-r)(\lambda'-1)} \vert \psi_n^{(\lambda)}(\phi) \rangle_1 \vert \psi_n^{(\lambda')}(\varphi) \rangle_2 \, .
\]
From this expression is possible to calculate the partial density matrices for each mode in the bipartite state. For this we obtain the total density matrix and perform the partial trace operation. Finally, arriving to
\begin{eqnarray*}
\hat{\rho}(1)=\sum_{r,s,\lambda,\lambda', \mu=1}^n D_{r,\lambda,\lambda'} D_{s,\mu,\lambda'}^* \, \vert \psi_n^{(\lambda)} (\phi) \rangle \langle \psi_n^{(\mu)} (\phi) \vert ,\nonumber \\
 \hat{\rho}(2)=\sum_{r,s,\lambda,\lambda', \mu'=1}^n D_{r,\lambda,\lambda'} D_{s,\lambda,\mu'}^* \, \vert \psi_n^{(\lambda')} (\varphi) \rangle \langle \psi_n^{(\mu')} (\varphi) \vert \, ,
\end{eqnarray*}
where $D_{r,\lambda,\lambda'}=\frac{ \mu_n^{(1-r)(\lambda+\lambda'-2)}}{n^2 \mathcal{N}_\lambda \mathcal{N}_{\lambda'}} c_r$. After this, one can calculate the entanglement between the modes. The entanglement is calculated by the linear entropy of the partial density matrices, giving the following result
\begin{equation}
S_L(1)=1- \sum_{\lambda,\mu=1}^n \vert F_{\lambda,\mu}\vert^2\, , \quad F_{\lambda,\mu}=\sum_{r,s,\lambda'=1}^n D_{r,\lambda,\lambda'} D_{s,\mu,\lambda'}^* \, .
\end{equation}
The quantification of the entanglement by using the decomposition of the two-mode system in terms of cyclic states was done in a easier way than by directly taking the expression of the state $\vert T \rangle$ of Eq.~(\ref{tstate}). Several other quantities can be calculated using this decomposition as the mean values and the covariance matrix of the system.

\section{Generalized dihedral states}
The dihedral group of $n$-th order ($D_n$) is a non-Abelian group which contains all the symmetry operations of the $n$-sided regular polygon. In other words, it contains the rotations of the cyclic group $C_n$ and the inversion operators $\hat{U}_r$; $r=1,\ldots,n$. The inversions in the phase space are defined by a rotation plus the complex conjugation operator $\hat{C}$, i.e., $\hat{U}_r=\hat{C}\hat{R}(\theta_r)$, with $\theta_r=2\pi(r-1)/n$. In order to obtain any state associated to the dihedral group, one must impose the condition for the state to be invariant under both the rotations and inversions contained in $D_n$. Inspired by the cyclic states, one can use a superposition of all the rotations and inversions of a noninvariant state $\vert \phi \rangle$, that is the superposition of the states $\hat{R}(\theta_r)\vert \phi \rangle$ and $\hat{U}_r \vert \phi \rangle$. As we have seen in the sections 3 and 4, the superpositions with probability amplitudes given by the characters of the cyclic group $\chi_n^{(\lambda)} (g_r)$ are orthogonal as they contain different photon numbers. Given these arguments we define a set of $n$ dihedral states, each one corresponding to an irreducible representation of the cyclic subgroup $C_n$, as follows
\begin{mydef}
Let $\vert \phi \rangle=\sum_{m=0}^\infty A_m (\phi) \vert m \rangle$ be a quantum state with at least one mean quadrature component ($\hat{x}=(\hat{a}+\hat{a}^\dagger)/\sqrt{2}$, $\hat{p}=i(\hat{a}^\dagger-\hat{a})/\sqrt{2}$) different from zero, i.e., $\langle \phi \vert \hat{x} \vert \phi \rangle \neq 0$, or $\langle \phi \vert \hat{p} \vert \phi \rangle \neq 0$. The general dihedral state for the irreducible representation $\lambda$ of the subgroup $C_n$ is defined as
\begin{equation}
\left\vert \gamma_n^{(\lambda)} (\phi) \right\rangle = \mathcal{N}_\lambda \sum_{r=1}^n (\chi^{(\lambda)}_n (g_r) \vert \phi_r \rangle+\chi^{*(\lambda)}_n (g_r) \vert \phi^*_r \rangle) \, ,
\label{ccy}
\end{equation}
where $\chi_n^{(\lambda)}(g_r)$ is the character associated to the  element of the group $g_r$ of the cyclic group, $\vert \phi^*_r \rangle=\hat{U}_r \vert \phi \rangle=\sum_{m=0}^\infty A^*_m (\phi) e^{i \theta_r m} \vert m \rangle$ ($\theta_r=2\pi (r-1)/n$), and where 
\[
\mathcal{N}_\lambda^{-2}=\sum_{r,r'=1}^n  (\chi_n^{*(\lambda)}(g_{r'}) \langle \phi_{r'} \vert+\chi_n^{(\lambda)}(g_{r'})\langle \phi^*_{r'} \vert)( \chi_n^{(\lambda)}(g_{r}) \vert \phi_r \rangle+\chi_n^{*(\lambda)}(g_{r})\vert \phi^*_r \rangle) \, .
\]
\label{defi3}
\end{mydef}
We would like to emphasize that it is the first time that an orthogonal set of states have been associated to the dihedral group. These set of states are invariant, up to a phase, under the application of all the dihedral group elements. As the construction of the dihedral states corresponds to the sum of two cyclic states: one with initial state $\vert \phi \rangle=\sum_{m=0}^\infty A_m (\phi) \vert m \rangle$ and the other with the initial state $\vert \phi^* \rangle=\sum_{m=0}^\infty A^*_m (\phi) \vert m \rangle$, then the invariance under rotations can be implied from the cyclic states invariance (up to a phase) of Eq.~(\ref{cyc_inv})
\[
\hat{R}(\theta_l)\vert \gamma^{(\lambda)}_n (\phi) \rangle=\mu_n^{(1-\lambda)l}\vert \gamma^{(\lambda)}_n (\phi) \rangle \, ,
\]
from this correspondence one can obtain an expression for the inversions acting on the dihedral states $\hat{U}_l \vert \gamma_n^{(\lambda)}\rangle$ ($\hat{U}_l=\hat{C}\hat{R}(\theta_l)$):
\begin{eqnarray*}
\hat{U}_l \vert \gamma_n^{(\lambda)}\rangle&=&\hat{C}\mu_n^{(1-\lambda)l}\vert \gamma^{(\lambda)}_n (\phi) \rangle \, \\
&=&\mu_n^{(\lambda-1)l} \vert \gamma_n^{(\lambda)} \rangle \, ,
\end{eqnarray*}
and thus one can imply that the dihedral state $\vert \gamma_n^{(\lambda)}\rangle$ in Def.~\ref{defi3} is invariant, up to a phase, under all the elements of the dihedral group $D_n$.

As we can see in Def.~\ref{defi3}, the dihedral group can be defined using the sum of a noninvariant state $\vert \phi \rangle$ and its conjugate $\vert \phi^* \rangle$, this implies that the cyclic state $\vert \psi_n^{(\lambda)}\rangle$ is also a dihedral state $\vert \gamma_n^{(\lambda)}\rangle$ when the initial state has only real photon number probability amplitudes $A_m(\phi)\in \mathbb{R}$, implying $\vert \phi \rangle=\vert \phi^* \rangle$. One can also notice that the dihedral states correspond to the erasure map of the state $(\vert \phi \rangle+\vert \phi^*\rangle)/\sqrt{2}$ since, as stated before, the dihedral state correspond to the sum of the cyclic states for $\vert \phi \rangle$ and $\vert \phi^* \rangle$.

As stated before, the sum $ \chi^{(\lambda)}_n (g_r) \vert \phi \rangle+ \chi^{*(\lambda)}_n (g_r) \vert \phi^*\rangle$ used to obtain the dihedral superpositions, is an state with real probability amplitudes, as $\vert \phi \rangle=\sum_{m=0}^\infty A_m(\phi) \vert m \rangle$, then $ \chi^{(\lambda)}_n (g_r) \vert \phi \rangle+ \chi^{*(\lambda)}_n (g_r) \vert \phi^* \rangle=2 \sum_{m=0}^\infty {\rm Re}( \chi^{(\lambda)}_n (g_r)\, A_m(\phi)) \vert m \rangle$. It can be seen that an analogous procedure to define dihedral states can be done by using the imaginary part of the probability amplitudes $ \chi^{(\lambda)}_n (g_r) A_m(\phi)$, e.g., by using the subtraction of the states $ \chi^{(\lambda)}_n (g_r) \vert \phi \rangle- \chi^{*(\lambda)}_n (g_r)\vert \phi^* \rangle$ instead of the sum $ \chi^{*(\lambda)}_n (g_r)\vert \phi \rangle+ \chi^{*(\lambda)}_n (g_r)\vert \phi^*\rangle$. The states associated to the subtraction are also invariant, up to a phase, under all the transformations contained in the dihedral group, however they are not orthogonal to the states defined in Def.~\ref{defi3}. However, they still can be helpful as they contain the dihedral symmetry.

In fig.~\ref{wignerd3}, the Wigner functions and their contour plots for each one of the three states associated to the dihedral group $D_3$ are shown. To construct this figure, the Gaussian state of Eq.~(\ref{gaus}) with $a=1$ and $b=1+i$ was used to generate the states of $D_3$. In all the cases one can notice that additionally to the rotational symmetry of the $C_3$ subgroup, the inversion invariance is also present.

\begin{figure}
\centering
\includegraphics[scale=0.28]{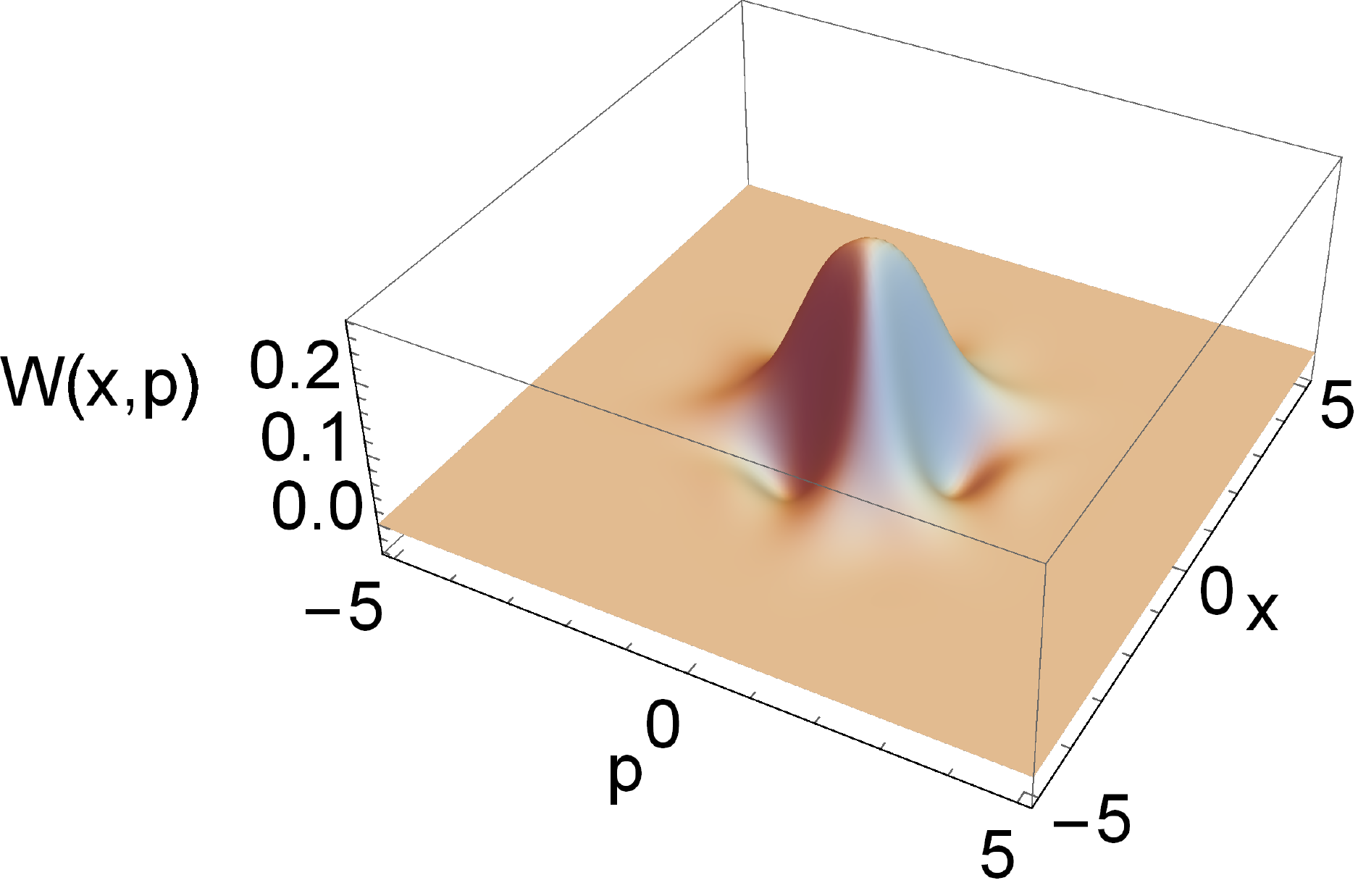} 
\includegraphics[scale=0.28]{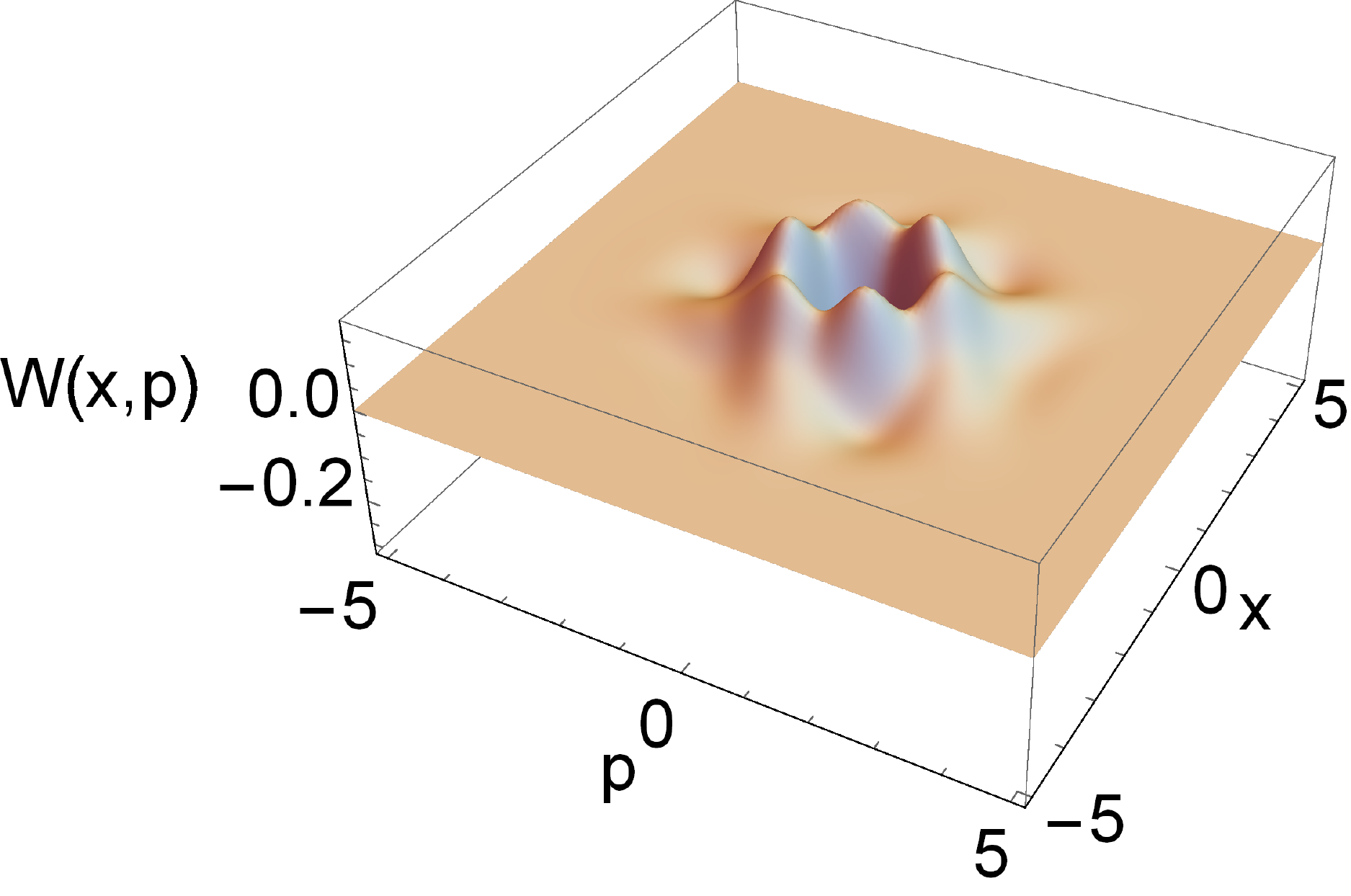} 
\includegraphics[scale=0.28]{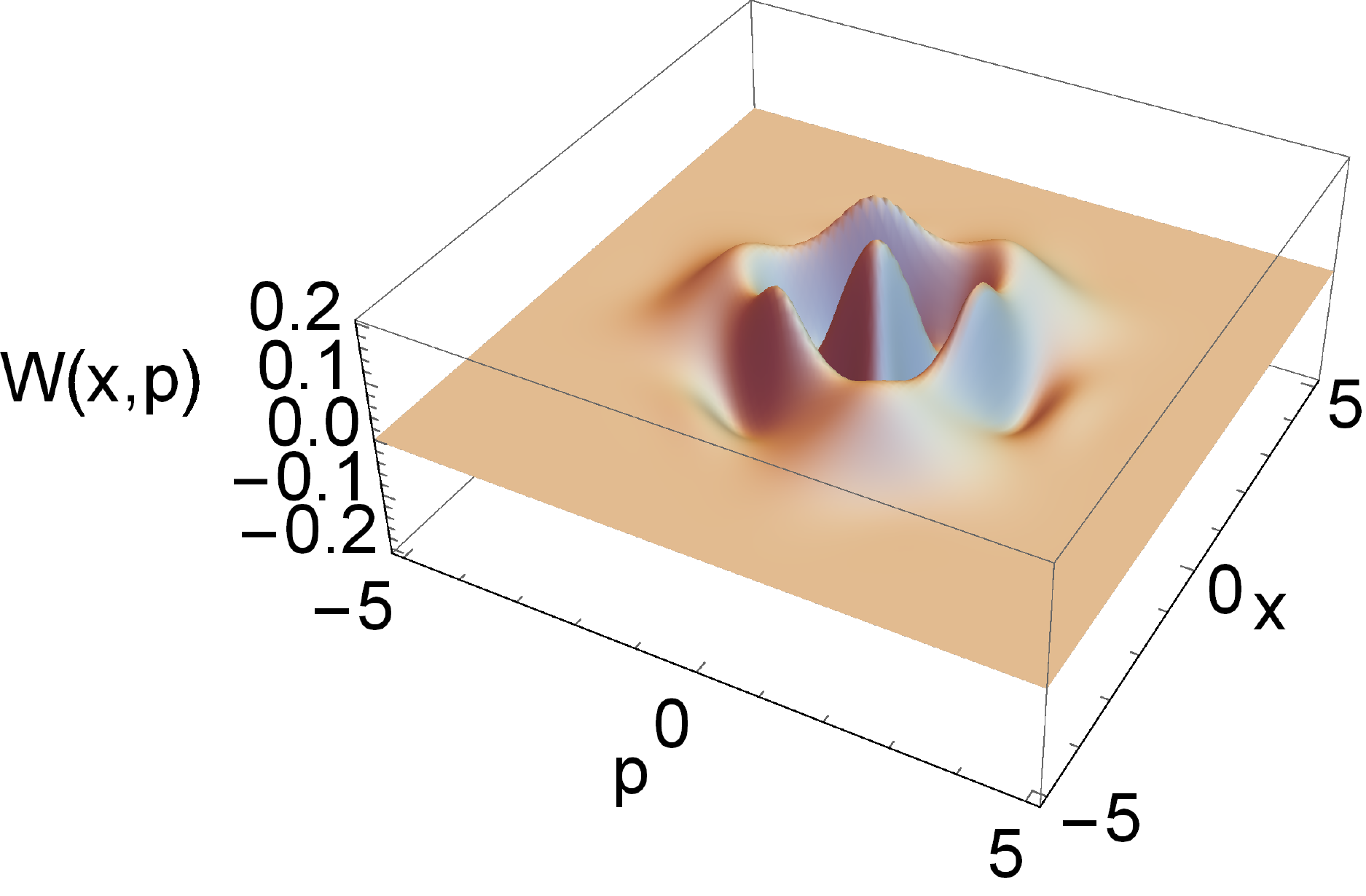} 
\includegraphics[scale=0.38]{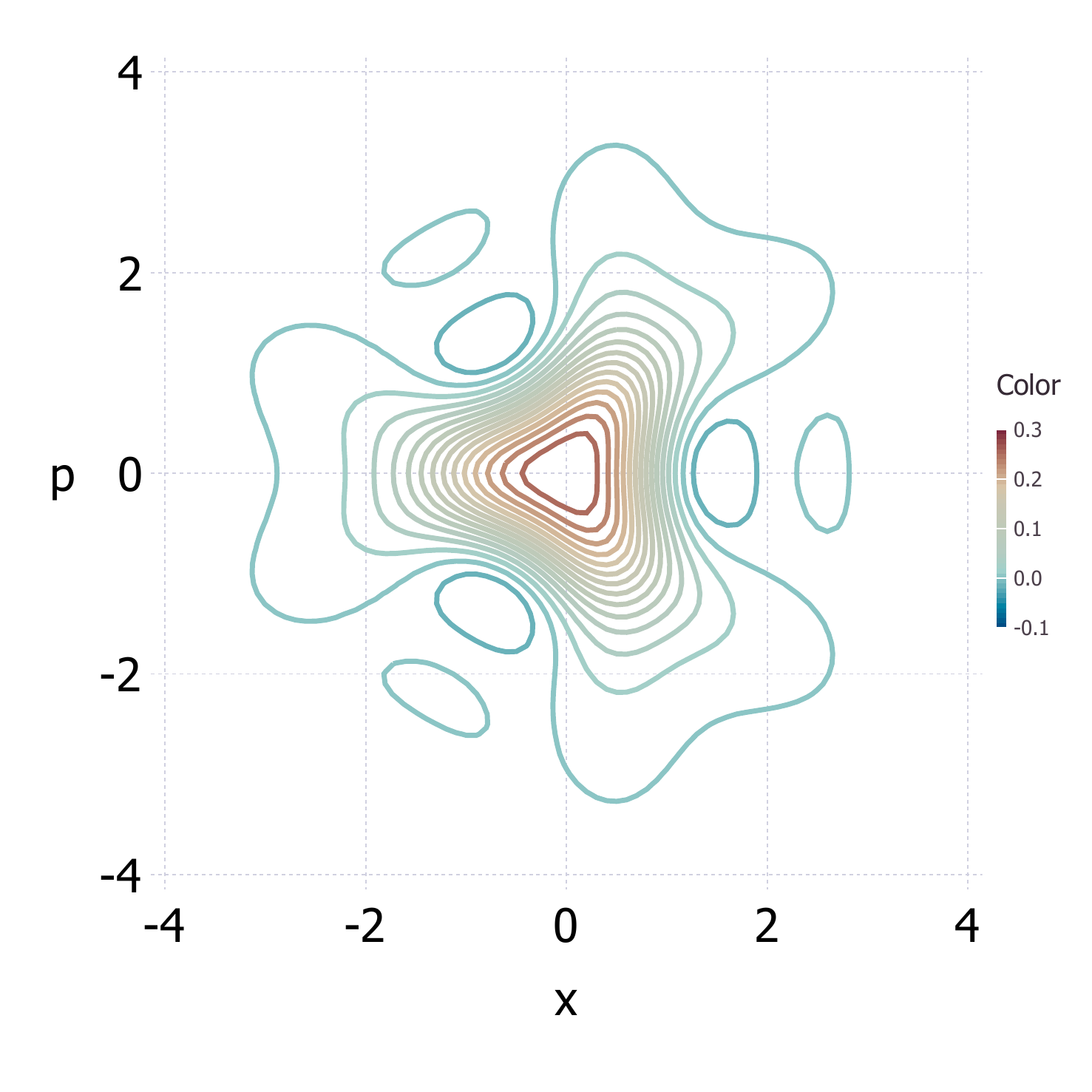}
\includegraphics[scale=0.38]{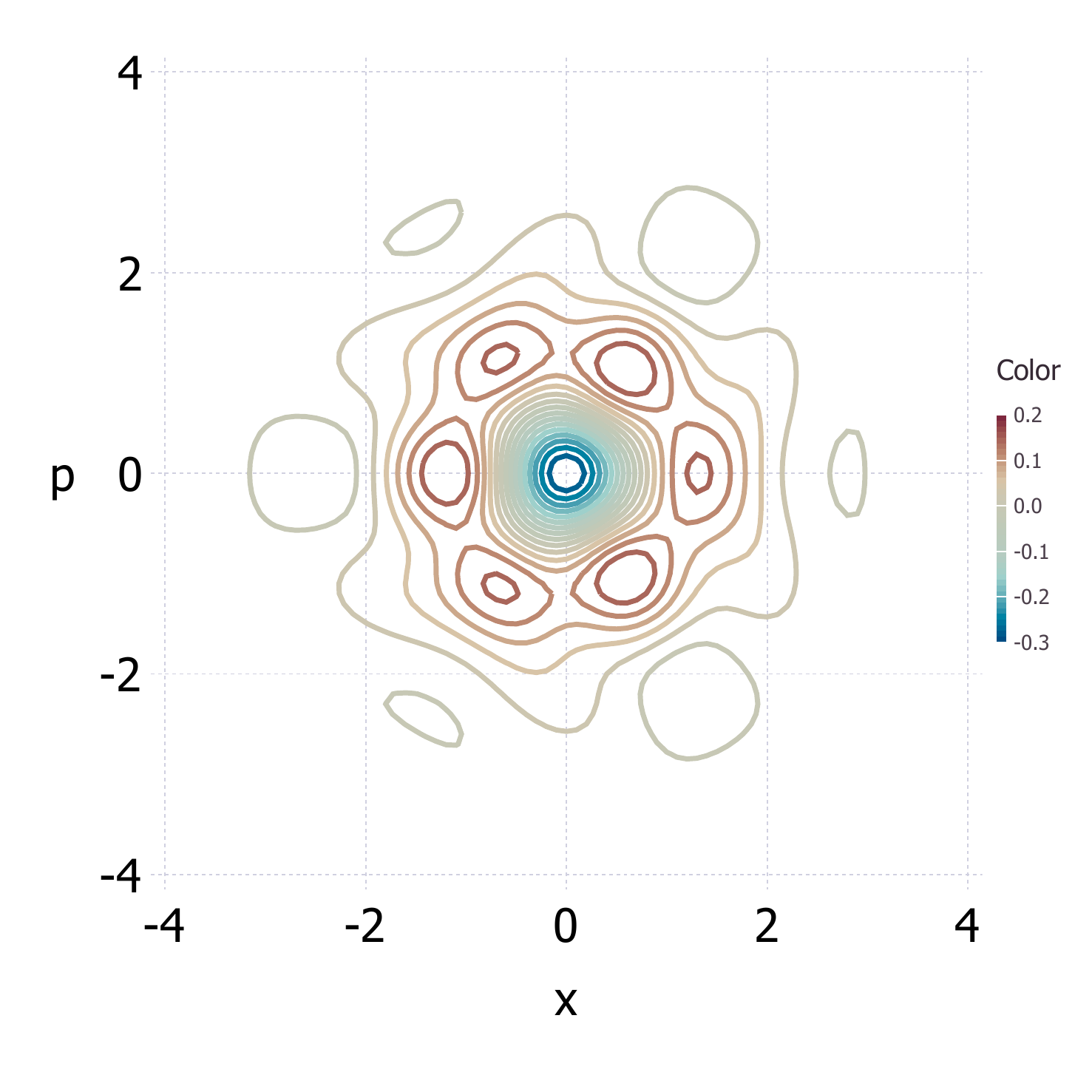}
\includegraphics[scale=0.38]{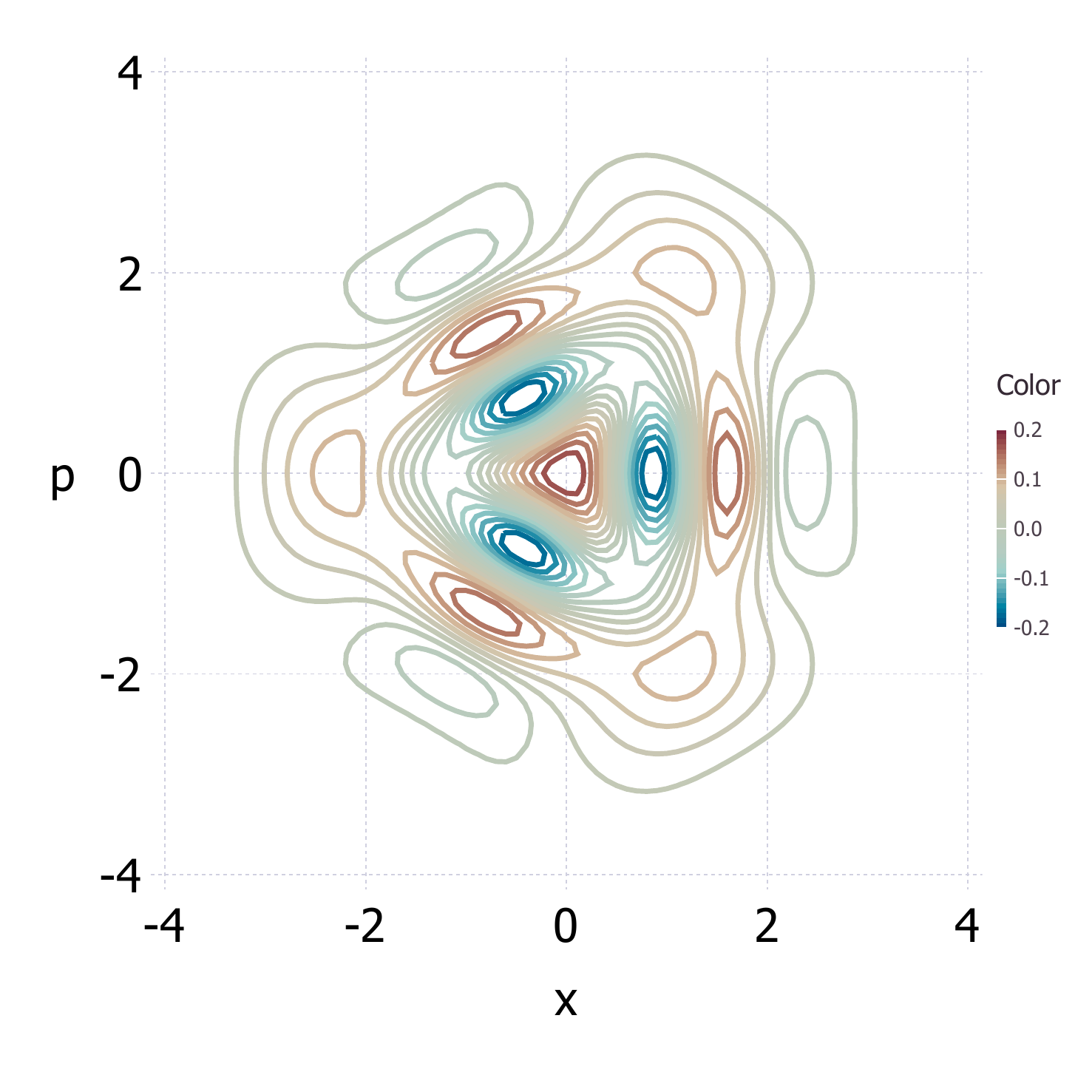}
\caption{Wigner functions and their contour plots for the dihedral Gaussian states associated to the different irreducible representations $\lambda$ of the group $D_3$ for $\lambda=1$ (left), $\lambda=2$ (center), and $\lambda=3$ (right). For these figures the chosen parameters for the initial Gaussian state with $a=1$ and $b=1+i$. \label{wignerd3}}
\end{figure}

\section*{Summary and conclusions}
A general procedure to obtain a set of $n$ orthogonal pure states (or density matrices) associated to each of the irreducible representations of the cyclic group $C_n$ and dihedral group $D_n$ was proposed. This procedure can be summarized as follows: given any state $\vert \phi \rangle$ which is not invariant under the rotations of the cyclic group, the cyclic states can be obtained  from the weighted superposition of the phase-space rotations of the initial state $\hat{R}(\theta_j)\vert \phi \rangle$ ($j=1,\ldots,n$), where the weights of each rotated state are given by the characters of each irreducible representation. This procedure is then extended to density matrices where the weighed superpositions are made of the elements $\hat{R}(\theta_r)\hat{\rho} \hat{R}^\dagger (\theta_s)$, where $\hat{\rho}$ is the initial noninvariant density matrix. Additionally, it was shown that the resulting states associated to $C_n$ provided by our method are invariant, up to a phase, under any element of the group. The associated states to the dihedral group $D_n$ are defined through the rotations of the original noninvariant state $\vert \phi \rangle$ and its complex conjugate $\vert \phi^* \rangle$. In the case of the dihedral states, it is the first time that an orthogonal set of states have been associated to the dihedral group. 

The correspondence between the cyclic states of $C_n$ and the renormalized states obtained after the erasure of certain photon numbers was established and discussed. In particular, it was shown that the cyclic state corresponds, up to a phase, to the renormalized states with photon number states $\vert m \rangle$ erased, where the erased states do not satisfy the condition ${\rm mod}(\lambda+m-1,n)=0$. In an analogous way, the cyclic density matrices obtained by our method correspond to the renormalized matrices where the photon number operators $\vert m \rangle \langle m' \vert$, which does not satisfy the conditions ${\rm mod}(\lambda-m-1,n)=0$ and ${\rm mod}(\lambda-m'-1,n)=0$, are eliminated. On the other hand, the dihedral states correspond to the sum of the cyclic states defined with the states $\vert \phi \rangle$ and $\vert \phi^* \rangle$, for this reason they correspond to the erasure map of the state $(\vert \phi \rangle+\vert \phi^* \rangle)/\sqrt{2}$.

As example of the procedure the general cyclic Gaussian states were defined. It was shown that these states can present subpoissonian photon number statistics by using the Mandel parameter $M_Q=\langle (\Delta \hat{n})^2 \rangle/ \langle \hat{n} \rangle$. The symmetry properties of the cyclic Gaussian states associated to $C_3$ were also checked using the Wigner function. Also, the correspondence between the circle symmetric states $C_n$ ($n\rightarrow \infty$): $\vert \psi_\infty^{(\lambda)}\rangle$ and the Fock states $\vert \lambda-1 \rangle$ was demonstrated.

Also, as an example of the use of the cyclic states, the calculation of the entanglement between subsystems in a two-mode state was presented. This calculation takes advantage of the orthogonality of the cyclic states to define a finite representation of particular bipartite states.

The possible experimental realization of these states was briefly discussed given the evidence presented in \cite{cordero1,cordero2} for a generation of cyclic states in the atom-field interaction, and in \cite{vlastakis} were these type of superposition can be obtained using a superconducting transmon coupled with a cavity resonator.

\section*{Acknowledgments}
This work was partially supported by DGAPA-UNAM (under project IN101619). 


\end{document}